\providecommand{\U}[1]{\protect\rule{.1in}{.1in}}
\newtheorem{theorem}{Theorem}
\newtheorem{lemma}{Lemma}
\newenvironment{proof}[1][Proof]{\textbf{#1.} }{\  \rule{0.5em}{0.5em}}
\begin{document}

\title{Forecasting observables with particle filters: Any filter will do!\thanks{The
research has been supported by Australian Research Council Discovery Grants
DP150101728 and DP170100729.}}
\author{Patrick Leung\thanks{Department of Econometrics and Business Statistics,
Monash University. }, Catherine S. Forbes\thanks{Department of Econometrics
and Business Statistics, Monash University.}, Gael M. Martin\thanks{Department
of Econometrics and Business Statistics, Monash University. Corresponding
author; email: gael.martin@monash.edu; postal address: 20 Chancellors Walk,
Monash University, Clayton 3800, Australia.} and Brendan
McCabe\thanks{Management School, University of Liverpool.}}
\maketitle

\begin{abstract}
We investigate the impact of filter choice on forecast accuracy in state space
models. The filters are used both to estimate the posterior distribution of
the parameters, via a particle marginal Metropolis-Hastings (PMMH) algorithm,
and to produce draws from the filtered distribution of the final state.
Multiple filters are entertained, including two new data-driven methods.
Simulation exercises are used to document the performance of each PMMH
algorithm, in terms of computation time and the efficiency of the chain. We
then produce the forecast distributions for the one-step-ahead value of the
observed variable, using a fixed number of particles and Markov chain draws.
Despite distinct differences in efficiency, the filters yield virtually
identical forecasting accuracy, with this result holding under both correct
and incorrect specification of the model. This invariance of forecast
performance to the specification of the filter also characterizes an empirical
analysis of S\&P500 daily returns.

\bigskip

\emph{KEYWORDS: Bayesian Prediction; Particle MCMC; Non-Gaussian Time Series;
State Space Models; Unbiased Likelihood Estimation; Sequential Monte
Carlo.}\bigskip

\emph{JEL Classifications: C11, C22, C58.}

\end{abstract}

\baselineskip20pt

\section{Introduction}

Markov chain Monte Carlo (MCMC) schemes for state space models in which the
likelihood function is estimated using a particle filter, have expanded the
toolkit of the Bayesian statistician. The seminal work on \textit{particle}
MCMC (PMCMC) by Andrieu, Doucet and Holenstein (2010) in fact illustrates the
more general concept of \textit{pseudo-marginal} MCMC, in which insertion of
an unbiased estimator of the likelihood within a Metropolis-Hastings (MH)
algorithm is shown to yield the correct invariant distribution (Beaumont,
2003; Andrieu and Roberts, 2009). Subsequent work on PMCMC by Flury and
Shephard (2011)\ and Pitt, dos Santos Silva, Giordani and Kohn (2012) has
explored the interface between different filtering-based estimates of the
likelihood and the mixing properties of the resultant PMCMC algorithms in a
variety of settings, with the features of the true data generating process -
in particular the signal-to-noise ratio in the assumed state space model (SSM)
- playing a key role in the analysis. See Whiteley and Lee (2014), Del Moral,
Doucet, Jasra, Lee, Yau and Zhang (2015), Del Moral and Murray (2015),
Guarniero, Johansen and Lee (2017), Deligiannidis, Doucet and Pitt (2018),
Doucet and Lee (2018) and Quiroz, Tran, Villani and Kohn (2018, 2019) for a
range of more recent contributions to the area.

The aim of this paper is a very particular one: to explore the implications
for \textit{forecast accuracy} of using different particle marginal MH (PMMH)
algorithms. That is, we address the question of whether or not the specific
nature of the filter - used both to construct the likelihood estimate and to
draw from the filtered distribution of the final state - affects the forecast
distribution of the observed, or measured, random variable in the state space
model. This is, as far as we are aware, a matter that has not yet been
investigated, and is one of practical relevance to those researchers whose
primary goal is prediction, rather than inference \textit{per se}. The
performance of several well-established methods is explored, namely\textbf{
}the bootstrap particle filter (BPF) of Gordon, Salmond and Smith (1993), the
auxiliary particle filter (APF) of Pitt and Shephard (1999) and the unscented
particle\ filter (UPF) of van de Merwe, Doucet, de Freitas and Wan
(2000).\textbf{ }In order to broaden the scope of the investigation, we
introduce and include two new particle filters. Drawing on a particular
representation of the components in an SSM, as first highlighted in Ng,
Forbes, Martin and McCabe (2013), our first new filter provides a mechanism
for generating independent proposal draws using information on the current
data point only, and we use the term `data-driven particle filter' (DPF) to
refer to it as a consequence. The second filter is a modification of this
basic DPF - a so-called `unscented' DPF (UDPF) - which exploits unscented
transformations (Julier, Uhlmann and Durrant-Whyte, 1995; Julier and Uhlmann,
1997) in conjunction with the DPF mechanism to produce draws that are informed
by both the current observation and the previous state. The paper thus
includes a range of\textbf{ }particle\textbf{ }filters that, in varying ways,
allow for differential impact of the current observation and the past
(forecasted) information about the current state variable. Hence, the
conclusions we draw about forecast performance cannot be deemed to be unduly
influenced by focusing on too narrow a class of filter.

Using simulation, from a range of state space models, and under different
scenarios for the signal-to-noise ratio, we first document the performance of
each filter-specific PMMH algorithm, in terms of computation time and the
efficiency of the chain. We then produce the forecast distributions for the
one-step-ahead value of the observed random variable, documenting forecast
performance over a hold-out period. The key result is that, despite
differences in efficiency, the alternative filters yield virtually identical
forecasting accuracy, with this result holding under both correct and
incorrect specification of the true DGP.

To further substantiate this conclusion, we use alternative filters in an
empirical setting in which a relatively simple stochastic volatility
(SV)\textbf{\ }model is estimated using data from Standard and Poor's
Composite Index, denoted simply as the S\&P 500. Given that this data has
typically been modelled using a\textbf{ }much more complex process, it is
likely that the model is misspecified, and inference itself impacted by that
misspecification. Despite this fact we find, again, that the probabilistic
forecasts produced by the different filters are virtually identical and,
hence, yield the same degree of forecast accuracy.

In Section \ref{like_est} we begin with an outline of the role played by
particle filtering in likelihood estimation and implementation of a PMMH
scheme followed, in Section \ref{forecasting}, with a description of its role
in producing an estimate of the one-step-ahead forecast distribution for the
observed variable. In Section \ref{DPF_alg} we first provide a brief outline
of the existing, and well-known, filters that we include in our
investigations: the BPF, APF and UPF. The two new filters that we introduce,
the DPF and the UDPF, are then described. The computational benefit of using a
multiple matching of particles (Lin, Zhang, Cheng and Chen, 2005) in the
production of the likelihood estimate is explored in the context of the DPF,
and in Appendix A.3 it is established that the likelihood estimators resulting
from both new filters are unbiased.

Two different simulation exercises are conducted in Section \ref{sim_ex}%
.\textbf{ }The first investigates the relative computational burden of each of
several distinct filter types, along with the resulting impact on the mixing
properties of the corresponding Markov chain. This exercise is based on three
alternative state space models: i) the linear Gaussian model; ii) the
stochastic conditional duration model of Bauwens and Veredas (2004) (see also
Strickland, Forbes and Martin, 2006); and iii) the stochastic volatility model
of Taylor (1982) (see also Shephard, 2005). The alternative filter
types\textbf{ }are used to estimate the likelihood function within an adaptive
random walk-based MH algorithm. For each method we record both the `likelihood
computing time' associated with each filtering method - namely the
average\textbf{ }time taken to produce a likelihood estimate with a given
level of precision at some representative (vector) parameter value - and the
inefficiency factors associated with the resultant PMMH algorithm. In so doing
we follow the spirit of the exercise undertaken in Pitt\textit{ \textit{et
al.}} (2012), amongst others, in which a balance is achieved between
computational burden and the efficiency of the resultant Markov chain;
measuring as we do the time taken to produce a likelihood estimate that is
sufficiently accurate to yield an acceptable mixing rate in the chain.

In the second simulation exercise, forecasting performance is the focus. We
estimate the one-step-ahead forecast (or\textbf{ }predictive) distribution of
the observed variable, for each of the different filters and associated PMMH
algorithms, repeating the exercise (using expanding windows) over a hold-out
period, and assessing forecast accuracy using a logarithmic scoring rule
(Gneiting and Raftery, 2007). We first generate artificial data from a
stochastic volatility model, and produce the forecast distributions using the
correctly specified model. We then keep the forecast model the same, but
generate data from a stochastic volatility model with both price and
volatility jumps, in order to assess the impact on forecast performance of
model misspecification. To allow the documented differences in the performance
of the different filters to potentially affect forecast accuracy, in this
exercise we hold the number of particles, and the number of MCMC draws, fixed
for all PMMH methods. Despite this, the alternative filters yield virtually
identical forecasting accuracy, with this result holding under both correct
and incorrect specification of the true DGP. In Section \ref{emp_section}, an
empirical study is undertaken, in which the competing PMMH\ algorithms are
used to produce one-step-ahead forecast distributions from an SV model for the
S\&P 500 data. Again, despite restricting the different filters to operate
with the same number of particles, and the resulting Markov chains to have the
same number of iterations, the resulting forecast performance of the competing
methods is essentially equivalent. Section \ref{conclude} concludes.

\section{\textbf{PMMH and Forecasting}}

\subsection{Unbiased likelihood estimation and PMMH\textbf{\label{like_est}}}

In our context, an SSM describes the evolution of a latent state variable,
denoted by $x_{t}$, over discrete times $t=1,2,...$, according to the state
transition probability density function (pdf),
\begin{equation}
p\left(  x_{t+1}|x_{t},\theta\right)  , \label{state}%
\end{equation}
and with initial state probability given by $p\left(  x_{0}|\theta\right)  $,
where $\theta$ denotes a vector of unknown parameters. The observation in
period $t$, denoted by $y_{t}$, is modelled conditionally given the
contemporaneously indexed state variable via the conditional measurement
density
\begin{equation}
p\left(  y_{t}|x_{t},\theta\right)  . \label{meas}%
\end{equation}
Without loss of generality we assume that both $x_{t}$ and $y_{t}$ are scalar.

Typically, the complexity of the model is such that the likelihood function,%
\begin{equation}
L(\theta)=p(y_{1:T}|\theta)=p(y_{1}|\theta)\prod\limits_{t=2}^{T}%
p(y_{t}|y_{1:t-1},\theta), \label{likelihood}%
\end{equation}
where $y_{1:t-1}=\left(  y_{1},y_{2},\ldots,y_{t-1}\right)  ^{\prime},$ is
unavailable in closed form. Particle filtering algorithms play a role here by
producing (weighted)\emph{ }draws from the filtering density at time $t,$
$p(x_{t}|y_{1:t},\theta),$ with those draws in turn being used, via standard
calculations, to estimate the one-step ahead predictive densities of which the
likelihood function in (\ref{likelihood}) is comprised. The filtering
literature is characterized by different methods of producing and weighting
the filtered draws, or particles, with importance sampling principles being
invoked, and additional MCMC steps also playing a role in some cases. Not
surprisingly, performance of the alternative algorithms (often measured in
terms of the accuracy with which the filtered density itself is estimated) has
been shown to be strongly influenced by the empirical characteristics of the
SSM, with motivation for the development of a data-driven filter coming from
the poor performance of the BPF (in particular) in cases where the
signal-to-noise ratio is large; see\textbf{ }Giordani, Pitt and Kohn\textit{
}(2011) and Creal (2012) for extensive surveys and discussion, and Del Moral
and Murray (2015) for a more recent contribution.

A key insight of Andrieu\textit{ \textit{et al.} }(2010) is that particle
filtering can be used to produce an unbiased estimator of the likelihood
function which, when embedded within a suitable MCMC algorithm, yields exact
Bayesian inference, in the sense that the invariant distribution of the Markov
chain is the posterior of interest, $p(\theta|y_{1:T}).$ In brief, by defining
$u$ as the vector containing the canonical identically and independently
distributed ($i.i.d.$) random variables that underlie a given filtering
algorithm, and defining the corresponding\textbf{ }filtering-based estimate of
$L(\theta)$ by $\widehat{p}_{u}(y_{1:T}|\theta)=p(y_{1:T}|\theta,u)$, the role
played by the auxiliary variable $u$ in the production of the estimate is made
explicit. Andrieu\textit{ \textit{et al.}}\emph{ }demonstrate that under the
condition that%
\begin{equation}
E_{u}[\widehat{p}_{u}(y_{1:T}|\theta)]=p(y_{1:T}|\theta), \label{unbiased}%
\end{equation}
i.e., that\textbf{ }$\widehat{p}_{u}(y_{1:T}|\theta\mathbf{)}$ is an unbiased
estimator of the likelihood function,\textbf{ }then the marginal posterior
associated with the joint distribution,\emph{ }%
\begin{equation}
p(\theta,u|y_{1:T})\propto p(y_{1:T}|\theta,u)\times p(\theta)\times p\left(
u\right)  , \label{augmented joint}%
\end{equation}
is $p(\theta|y_{1:T}).$ Hence, this marginal posterior\textbf{ }density can be
accessed via an MH algorithm\emph{ }for example, in which the estimated
likelihood function, $\widehat{p}_{u}(y_{1:T}|\theta)$, replaces the exact
(but unavailable) $p(y_{1:T}|\theta)$ in the ratio that defines the MH
acceptance probability at iteration $i$ in the chain,%
\[
\alpha=\min\left\{  1,\frac{\widehat{p}\left(  y_{1:T}|\theta^{c}\right)
p\left(  \theta^{c}\right)  q\left(  \theta^{(i-1)}|\theta^{c}\right)
}{\widehat{p}\left(  y_{1:T}|\theta^{(i-1)}\right)  p\left(  \theta
^{(i-1)}\right)  q\left(  \theta^{c}|\theta^{(i-1)}\right)  }\right\}  ,
\]
where $q(\cdot)$ denotes the candidate density, $\theta^{c}$ a candidate draw
from $q(\cdot)$, and $\theta^{(i-1)}$ the value of the chain at iteration
$i-1$. Such an algorithm is referred to a particle \textit{marginal }MH (PMMH)
algorithm since draws from the augmented joint in (\ref{augmented joint})
represent draws from the desired marginal, $p(\theta|y_{1:T})$.

Flury and Shephard (2011) subsequently use this idea to conduct Bayesian
inference for a range of economic and financial models, employing the BPF as
the base particle filtering method. In addition, Pitt\textit{ \textit{et al.}
}(2012), drawing on Del Moral (2004), explicitly demonstrate the unbiased
property of the filtering-based likelihood estimators that are the focus of
their work and, as noted earlier, investigate the role played by the number of
particles in the resultant mixing of the chain.\emph{ }In summary, and as
might be anticipated, for any given particle filter\textbf{ }an increase in
the number of particles improves the precision of the corresponding\textbf{
}likelihood estimator (by decreasing its variance) and, hence, yields
efficiency that is arbitrarily close to that associated with an MCMC algorithm
that accesses the exact likelihood function. However, this accuracy\emph{ }is
typically\textbf{ }obtained at significant computational cost, with the
recommendation of Pitt\textit{ \textit{et al.}} being to choose the number of
particles that minimizes the cost of obtaining a precise likelihood estimator
yet still results in a sufficiently fast-mixing MCMC chain. Our aim is, in
part, to extend this analysis to cater for the DPF filters derived here and to
explore the performance of these new filters, both in a range of SSM settings
and in comparison with a number of\textbf{ }competing filters. However, the
overarching aim is to ascertain if differences in algorithmic performance -
arising from the use of different filters - translate into notable differences
in the PMMH-based forecast distributions and, hence, in forecast
accuracy.\footnote{Note, other PMCMC approaches, such as particle Gibbs, use
particle \emph{smoothing} techniques, in the production of $p\left(
\theta|y_{1:T}\right)  $, and bring with them additional numerical issues as a
consequence.(See Lindsten, Jordan and Sch\"{o}n, 2014, and Chopin and Singh,
2015.) Given our focus on forecasting \textit{per se}, as opposed to (joint)
parameter and state posterior inference, we focus solely on PMMH, and the
extraction from that algorithm of the filtered path of the states for use in
the forecasting exercise.}

\subsection{Forecasting with a PMMH algorithm\label{forecasting}}

Given the Markovian structure of (\ref{state}) and the condition provided by
(\ref{meas}), the one-step-ahead forecast density is given by%
\begin{equation}
p\left(  y_{T+1}|y_{1:T}\right)  =\int\int\int p\left(  y_{T+1}|x_{T+1}%
,\theta\right)  p\left(  x_{T+1}|x_{T},\theta\right)  p(x_{T}|y_{1:T}%
,\theta)p\left(  \theta|y_{1:T}\right)  dx_{T+1}dx_{T}d\theta.
\label{forecast}%
\end{equation}
To produce an estimate of this density, we start with a sequence of $MH$ draws
of $\theta$ drawn using the PMMH Markov chain, which result in the discrete
estimator of the posterior distribution $p\left(  \theta|y_{1:T}\right)  $,
given by%
\[
\widehat{p}\left(  \theta|y_{1:T}\right)  =\frac{1}{MH}\sum_{i=1}^{MH}%
\delta_{\theta^{(i)}},
\]
where $\delta_{\left(  \cdot\right)  }$ denotes the (Dirac) delta function,
see Au and Tam (1999).\footnote{Strictly speaking, $\delta_{\left(
\cdot\right)  }$ is a generalized function, and is properly defined as a
measure rather than as a function. However, we take advantage of the commonly
used heuristic definition here, as is also done in, for example, Ng.
\textit{et al} (2013).} Then, for each draw $\theta^{(i)}$, the one-step-ahead
predictive density, $p\left(  y_{T+1}^{o}|y_{1:T},\theta^{(i)}\right)  ,$ is
obtained through a final iteration of the particle filter, over a grid of
potential future observed values $\left\{  y_{T+1}^{o}\right\}  .$ The
estimate of (\ref{forecast}) is found by simply averaging these one-step-ahead
predictive densities, pointwise at each $y_{T+1}^{o}$ on the grid.

\section{The Filtering Algorithms\label{DPF_alg}}

\subsection{A quick review\label{overview}}

Particle filtering involves the sequential application of importance sampling
as each new observation becomes available, with the (incremental) target
density at\ time $t+1,$ being proportional to the\textbf{ }product of the
measurement density, $p\left(  y_{t+1}|x_{t+1},\theta\right)  $, and the
transition\textbf{ }density of the state $x_{t+1}$, denoted by $p\left(
x_{t+1}|x_{t},\theta\right)  $, as follows,
\begin{equation}
p(x_{t+1}|x_{t},y_{1:t+1},\theta)\propto p\left(  y_{t+1}|x_{t+1}%
,\theta\right)  p\left(  x_{t+1}|x_{t},\theta\right)  . \label{incr_target}%
\end{equation}
Note that draws of\textbf{ }the conditioning state value $x_{t}$\emph{ }are,
at time $t+1$, available from the previous iteration of the filter.

Particle\textbf{ }filters thus require the specification, at time $t+1$, of a
proposal density, denoted generically here by
\begin{equation}
g(x_{t+1}|x_{t},y_{1:t+1},\theta), \label{g}%
\end{equation}
from which the set of particles, $\{x_{t+1}^{[j]},j=1,...,N\}$, are generated
and, ultimately, used to estimate the filtered density as:%
\begin{equation}
\widehat{p}(x_{t+1}|y_{1:t+1},\theta)=\sum_{j=1}^{N}\pi_{t+1}^{[j]}%
\delta_{x_{t+1}^{[j]}}. \label{phat_xtplus1}%
\end{equation}
The normalized weights $\pi_{t+1}^{[j]}$ vary according to the choice of the
proposal $g\left(  \cdot\right)  $, the approach adopted for marginalization
(with respect to previous particles), and the way in which past particles are
`matched' with new particles in independent particle filter (IPF)-style
algorithms (Lin \textit{et al.}, 2005), of which the DPF is an example. In the
case of the BPF the proposal density in (\ref{g}) is equated to the transition
density, $p(x_{t+1}|x_{t},\theta)$, whilst for the APF the proposal is
explicitly dependent upon both the transition density and the observation
$y_{t+1}$, with the manner of the dependence determined by the exact form of
the auxiliary filter (see Pitt and Shephard, 1999, for details). The use of
both the observation and the previous state particle in the construction of a
proposal distribution is referred to as `adaptation' by the authors, with
`full' adaptation being feasible (only) when $p(y_{t+1}|x_{t},\theta)$ can be
computed and $p(x_{t+1}|x_{t},y_{t+1},\theta)$ is able to be\textbf{ }sampled
from directly. The UPF algorithm of van de Merwe\textit{ \textit{et al.}
}(2000) represents an alternative approach to adaptation, with particles
proposed\ via an approximation to the incremental target that uses unscented
transformations. For the IPF of Lin \textit{et al.} (2005), the proposal
reflects the form of $p\left(  y_{t+1}|x_{t+1},\theta\right)  $ in some
(unspecified) way, where the term `independence' derives from the lack of
dependence of the draws of $x_{t+1}$ on any previously obtained (and retained)
draws of $x_{t}.\emph{\ }$

With particle degeneracy (over time) being a well-known feature of filters, a
resampling step is typically employed. While most algorithms, including the
BPF, the UPF and the IPF, resample particles using the normalized weights
$\pi_{t+1}^{[j]}$, the APF incorporates resampling directly within $g\left(
\cdot\right)  $ by sampling particles from a\textbf{ }\textit{joint}
proposal,\textbf{ }$g(x_{t+1},k|x_{t},y_{1:t+1},\theta),$\textbf{ }where $k$
is an auxiliary variable that indexes previous particles. This allows the
resampling step, or the sampling of $k$, to take advantage of information from
the newly arrived observation,\textbf{ }$y_{t+1}$.

Given the product form of the target density $p(x_{t+1}|x_{t},y_{1:t+1}%
,\theta)$ in (\ref{incr_target}), the component that is relatively more
concentrated as a function of the argument\textbf{ }$x_{t+1}$\textbf{ }-
either\textbf{ }$p\left(  y_{t+1}|x_{t+1},\theta\right)  $\textbf{ }or\textbf{
}$p\left(  x_{t+1}|x_{t},\theta\right)  $ - will dominate in terms of
determining the shape of the target density. In the case of a strong
signal-to-noise ratio, meaning that the observation $y_{t+1}$ provides
significant information about the location of the unobserved state and
with\textbf{ }$p\left(  y_{t+1}|x_{t+1},\theta\right)  $\textbf{ }highly
peaked around $x_{t+1}$ as a consequence, an IPF proposal, which attempts to
mimic $p\left(  y_{t+1}|x_{t+1},\theta\right)  $ alone, can yield an accurate
estimate of\textbf{ }$p(x_{t+1}|x_{t},y_{1:t+1},\theta)$, in particular
out-performing the BPF, in which no account at all is taken of $y_{t+1}$ in
producing proposals of $x_{t+1}$.\emph{ }Lin \textit{et al.} (2005) in fact
demonstrate that, in a high signal-to-noise ratio scenario, an IPF-based
estimator of the mean of a filtered distribution can have a substantially
smaller variance than an estimator based on either the BPF or the APF,
particularly when computational time is taken into account. Since the DPF is
an IPF it\textbf{ }produces draws of $x_{t+1}$ via the structure of the
measurement density alone. The UDPF then augments the information from
$p\left(  y_{t+1}|x_{t+1},\theta\right)  $ with information from the second
component in (\ref{incr_target}). We detail these two new filters in the
following section.

\subsection{The new `data-driven' filters}

The key insight, first highlighted by Ng \textit{et al.} (2013) and motivating
the DPF and UDPF filters, is that the\textbf{ }measurement $y_{t+1}%
$\ corresponding to the state $x_{t+1}$\ in period $t+1$\ is often specified
via a measurement equation,%
\begin{equation}
y_{t+1}=h\left(  x_{t+1},\eta_{t+1}\right)  , \label{meas_fcn}%
\end{equation}
for a given function $h\left(  \cdot,\cdot\right)  $ and\textbf{ }$i.i.d.$
random variables $\eta_{t+1}$ having\textbf{ }common pdf $p\left(  \eta
_{t+1}\right)  $ , where $h\left(  \cdot,\cdot\right)  $ and $p\left(
\eta_{t+1}\right)  $ are $\theta$-dependent. Then, via a transformation of
variables, the measurement density may be expressed as
\begin{equation}
p(y_{t+1}|x_{t+1},\theta)=\int_{-\infty}^{\infty}p(\eta_{t+1})\left\vert
\frac{\partial h}{\partial x_{t+1}}\right\vert _{x_{t+1}=x_{t+1}(y_{t+1}%
,\eta_{t+1})}^{-1}\delta_{x_{t+1}(y_{t+1},\eta_{t+1})}d\eta_{t+1},
\label{Dirac_inversion}%
\end{equation}
where $x_{t+1}(y_{t+1},\eta_{t+1})$ is the unique solution to $y_{t+1}%
-h(x_{t+1},\eta_{t+1})=0$.\footnote{Extension to a finite number of multiple
roots is straightforward, and is discussed in Ng \textit{et al. }(2013).}
Further discussion of the properties of the representation in
(\ref{Dirac_inversion}) are provided in Ng\textit{ \textit{et al.}} The
advantage of the representation in (\ref{Dirac_inversion}) is that properties
of the delta function may be employed to manipulate the measurement density in
various ways. Whereas Ng \textit{et al.}\emph{ }exploit this representation
within a grid-based context, where the grid is imposed over the range of
possible values for the measurement error $\eta_{t+1}$, here we exploit the
representation to devise new particle filtering proposals, as detailed in the
following two subsections.

\subsubsection{The data-driven particle filter (DPF)\label{Section_DPF}}

With reference to (\ref{meas_fcn}), the DPF proposes particles by simulating
replicate and independent measurement errors, $\eta_{t+1}^{[j]}%
\overset{i.i.d.}{\sim}p(\eta_{t+1})$, and, given $y_{t+1}$, transforming these
draws to their implied state values $x_{t+1}^{[j]}=x_{t+1}(y_{t+1},\eta
_{t+1}^{[j]})$ via solution of the measurement equation. Recognizing the role
played by the Jacobian in (\ref{Dirac_inversion}), the particles
$x_{t+1}^{[j]}$ serve as a set of independent draws\emph{ }from a proposal
distribution with density $g(\cdot)$ satisfying
\begin{equation}
g(x_{t+1}|y_{t+1},\theta)=\left\vert \frac{\partial h\left(  x_{t+1}%
,\eta_{t+1}\right)  }{\partial x_{t+1}}\right\vert _{\eta_{t+1}=\eta^{\ast
}\left(  x_{t+1},y_{t+1}\right)  }p\left(  y_{t+1}|x_{t+1},\theta\right)  ,
\label{DPF_proposal_dist}%
\end{equation}
where $\eta^{\ast}\left(  x_{t+1},y_{t+1}\right)  $\emph{ }satisfies\emph{
}$y=h\left(  x_{t+1},\eta^{\ast}\left(  x_{t+1},y_{t+1}\right)  \right)
$.\emph{ }For the proposal distribution to have density $g(\cdot)$ in
(\ref{DPF_proposal_dist}), it is sufficient to assume both partial derivatives
of $h\left(  \cdot,\cdot\right)  $ exist and are non-zero, as is enforced in
the range of applications considered here. Furthermore, and given the lack of
explicit dependence of $g(\cdot)$ on $x_{t},$ the resultant sample from
(\ref{DPF_proposal_dist}) is such that the new draw $x_{t+1}^{[j]}$ can be
coupled with any previously simulated particle $x_{t}^{[i]}$, $i=1,2,...,N$.
When the $j^{th}$ particle $x_{t+1}^{[j]}$ is only ever matched with the
$j^{\emph{th}}$\ past\textbf{ }particle $x_{t}^{[j]}$, for each $j=1,...,N$
and each\textbf{ }$t=1,2,...,T$ , then the (unnormalized) weight\emph{ }of the
state draw is calculated as%
\begin{equation}
w_{_{t+1}}^{[j]}=\pi_{t}^{[j]}\frac{p\left(  y_{t+1}|x_{t+1}^{[j]}%
,\theta\right)  p\left(  x_{t+1}^{[j]}|x_{t}^{[j]},\theta\right)  }{g\left(
x_{t+1}^{[j]}|y_{t+1},\theta\right)  }, \label{IPF_MPF_weight}%
\end{equation}
for $j=1,...,N$. For the DPF, therefore, we have
\begin{equation}
w_{_{t+1}}^{[j]}=\pi_{t}^{[j]}\left\vert \frac{\partial h}{\partial x_{t+1}%
}\right\vert _{\eta_{t+1}=\eta_{t+1}^{[j]}\text{, }x_{t+1}=x_{t+1}^{[j]}}%
^{-1}p\left(  x_{t+1}^{[j]}|x_{t}^{[j]},\theta\right)  , \label{DPF_weight}%
\end{equation}
for $j=1,2,...,N$, where\textbf{ }$x_{0}^{[j]}\overset{iid}{\sim}p\left(
x_{0}|\theta\right)  $, $\pi_{0}^{[j]}=\frac{1}{N},$\textbf{ }and the
filtering weights\textbf{ }$\pi_{t+1}^{[j]}$,\textbf{ }in (\ref{phat_xtplus1}%
), are produced sequentially as
\begin{equation}
\pi_{t+1}^{[j]}\propto w_{_{t+1}}^{[j]} \label{particle_weights}%
\end{equation}
for all $j=1,2,...,N,$ with $\sum_{j=1}^{N}\pi_{t}^{[j]}=1$ for each $t$. In
addition, and as in any particle filtering setting (see, for example, Doucet
and Johansen, 2011), the iteration then provides component $t+1$ of the
estimated likelihood function as
\begin{equation}
\widehat{p}_{u}(y_{t+1}|y_{1:t},\theta)=\sum_{j=1}^{N}w_{t+1}^{[j]},
\label{PF_likelihood}%
\end{equation}
with each $w_{t+1}^{[j]}$ as given in (\ref{IPF_MPF_weight}).

Alternatively, as highlighted by Lin\textit{ \textit{et al.}} (2005), the
$j^{th}$ particle at $t+1$, could be matched with \textit{multiple} previous
particles from time\textbf{ }$t$. In this case, define $w_{t+1}^{[j][i]}$ as
the (unnormalized) weight corresponding to a match between $x_{t}^{[i]}$ and
$x_{t+1}^{[j]}$,
\[
w_{t+1}^{[j][i]}=\pi_{t}^{[i]}\frac{p\left(  y_{t+1}|x_{t+1}^{[j]}%
,\theta\right)  p\left(  x_{t+1}^{[j]}|x_{t}^{[i]},\theta\right)  }{g\left(
x_{t+1}^{[j]}|y_{1:t},\theta\right)  },
\]
for any $i=1,2,...,N$ and $j=1,2,...,N$. Next, denote $L$ \textit{distinct
cyclic} permutations\emph{ }of the elements in the sequence $(1,2,...,N)$
by\textbf{ }$K_{l}=(k_{l,1},...,k_{l,N})$, for $l=1,...,L.$ For each
permutation $l$, the $j^{th}$ particle $x_{t+1}^{[j]}$ is matched with the
relevant past particle indicated by\textbf{ }$x_{t}^{[k_{l,j}]}.$ Then, the
final (unnormalized) weight associated with $x_{t+1}^{[j]}$ is the simple
average, $w_{t+1}^{[j]}=\frac{1}{L}\sum_{l=1}^{L}w_{t+1}^{[j][k_{l,j}]}$.
Thus, in the DPF with multiple matching case, for $j=1,2,...,N$, we have
\begin{equation}
w_{_{t+1}}^{[j]}=\frac{1}{L}\left\vert \frac{\partial h}{\partial x_{t+1}%
}\right\vert _{\eta_{t+1}=\eta_{t+1}^{[j]}\text{, }x_{t+1}=x_{t+1}^{[j]}}%
^{-1}\sum_{l=1}^{L}\left[  \pi_{t}^{[k_{l,j}]}p\left(  x_{t+1}^{[j]}%
|x_{t}^{[k_{l,j}]},\theta\right)  \right]  , \label{Matched_DPF_weight}%
\end{equation}
with $\pi_{t}^{[j]}$ available from the previous iteration of the
filter.\textbf{ }Accordingly, as for the $L=1$ matching case in
(\ref{particle_weights}), the $\pi_{t+1}^{[j]}$ are then set proportional to
the $w_{_{t+1}}^{[j]}$ in (\ref{Matched_DPF_weight}), with\textbf{ }%
$\sum_{j=1}^{N}\pi_{t+1}^{[j]}=1$.\textbf{ }We consider this suggestion in
Section \ref{sim_ex}, and document the impact of the value of $L$ on filter
performance.\footnote{We note that the choice of $L=N$ yields the incremental
weight corresponding to the so-called marginal version of the filter. See
Klaas, de Freitas and Doucet (2012).} To ensure ease of implementation, pseudo
code for the DPF algorithm is provided as\textbf{ }Algorithm 1 in\textbf{
}Appendix A.1. Note that, as we implement the resampling of particles at each
iteration of all filters employed in Sections \ref{sim_ex} and
\ref{emp_section}, these steps are included as steps 8 and 9 in Algorithm 1.

The DPF, when applicable, thus provides a straightforward and essentially
automated way to estimate the likelihood function, in which only the
measurement equation is used in the generation of new particles.\footnote{This
idea of generating particles using only\textbf{ }information from the
observation and the measurement equation is, in fact, ostensibly similar to
notions of fiducial probability (see e.g. Hannig, Iyer, Lai and Lee, 2016).
However, in this case, whilst\textbf{ }the proposal density in
(\ref{DPF_proposal_dist}) for the latent state $x_{t+1}$ is obtained without
any knowledge of the predictive density (or `prior') given by $p\left(
x_{t+1}|y_{1:t},\theta\right)  $, the resampling of the proposed particles
according to either (\ref{DPF_weight}) or (\ref{Matched_DPF_weight}), means
that the resampled draws themselves \textit{do} take account of this
predictive information, and thus appropriately represent the filtered
distribution as in (\ref{phat_xtplus1}).} However, its performance depends
entirely on the extent to which the current observation is informative in
identifying the unobserved state location. This motivates the development of
the UDPF, in which proposed draws are informed by both the current observation
and a previous state particle.

\subsubsection{The unscented data-driven particle filter
(UDPF)\label{Section_UDPF}}

The UDPF uses a Gaussian approximation to the measurement density,\textbf{
}$p(y_{t+1}|x_{t+1},\theta)$,%
\begin{equation}
\widehat{p}(y_{t+1}|x_{t+1},\theta)\propto\frac{1}{\widehat{\sigma}_{M,t+1}%
}\phi\left(  \frac{x_{t+1}-\widehat{\mu}_{M,t+1}}{\widehat{\sigma}_{M,t+1}%
}\right)  . \label{DUPF_likelihood}%
\end{equation}
The terms $\widehat{\mu}_{M,t+1}$ and $\widehat{\sigma}_{M,t+1}^{2}$ denote
the (approximated)\ first and second (centred) moments (of $x_{t+1}$) implied
by an unscented transformation of\textbf{ }$\eta_{t+1}$\textbf{ }to\textbf{
}$x_{t+1}$,\textbf{ }for a given value of $y_{t+1}$, and where the subscript
$M$ is\textbf{ }used to reference the measurement equation via which these
moments are produced.\emph{ }Our motivation for using the unscented method,
including all details of the computation of the moments in this case, is
provided in Appendix A.2. The method derives from combining an assumed
Gaussian transition density, given by%
\begin{equation}
p(x_{t+1}|x_{t},\theta)=\frac{1}{\widehat{\sigma}_{P,t+1}}\phi\left(
\frac{x_{t+1}-\widehat{\mu}_{P,t+1}^{[j]}}{\widehat{\sigma}_{P,t+1}^{[j]}%
}\right)  , \label{state_tran}%
\end{equation}
where $\widehat{\mu}_{P,t+1}^{[j]}$ and $\widehat{\sigma}_{P,t+1}^{2[j]}$ are
assumed known, given particle\textbf{ }$x_{t}^{[j]}$, and the subscript $P$
references the predictive\textbf{ }transition equation to which these moments
apply. If needed, a Gaussian approximation (e.g. via a further unscented
transformation,\textbf{ }in which case both $\widehat{\mu}_{P,t+1}^{[j]}$ and
$\widehat{\sigma}_{P,t+1}^{2[j]}$ may also depend upon $x_{t}^{[j]}$) of a
non-Gaussian state equation may be accommodated. Having obtained the Gaussian
approximation in (\ref{DUPF_likelihood}) and applying the usual conjugacy
algebra, a proposal density is constructed as%
\begin{equation}
g(x_{t+1}|x_{t}^{[j]},y_{t+1},\theta)=\frac{1}{\widehat{\sigma}_{t+1}}%
\phi\left(  \frac{x_{t+1}-\widehat{\mu}_{t+1}^{[j]}}{\widehat{\sigma}%
_{t+1}^{[j]}}\right)  , \label{udpf_proposal}%
\end{equation}
where $\widehat{\mu}_{t+1}^{[j]}=\left\{  \widehat{\sigma}_{P,t+1}%
^{2[j]}\widehat{\mu}_{M,t+1}+\widehat{\sigma}_{M,t+1}^{2}\widehat{\mu}%
_{P,t+1}^{[j]}\right\}  /\left\{  \widehat{\sigma}_{P,t+1}^{2[j]}%
+\widehat{\sigma}_{M,t+1}^{2}\right\}  $ and $\widehat{\sigma}_{t+1}^{2[j]}=$
$\widehat{\sigma}_{M,t+1}^{2}\widehat{\sigma}_{P,t+1}^{2[j]}/$\newline%
$\left\{  \widehat{\sigma}_{P,t+1}^{2[j]}+\widehat{\sigma}_{M,t+1}%
^{2}\right\}  $ are the requisite moments of the Gaussian proposal, with the
bracketed superscript $[j]$ used\textbf{ }to reflect the dependence on the
$j^{th}$ particle\textbf{ }$x_{t}^{[j]}$. A resulting particle draw
$x_{t+1}^{[j]}$ from the proposal in (\ref{udpf_proposal}) is then weighted,
as usual, relative to the target, with the particle weight formula given by,%
\begin{equation}
w_{_{t+1}}^{[j]}=\pi_{t}^{[j]}\frac{p\left(  y_{t+1}|x_{t+1}^{[j]}%
,\theta\right)  p\left(  x_{t+1}^{[j]}|x_{t}^{[j]},\theta\right)  }{\frac
{1}{\widehat{\sigma}_{t+1}^{[j]}}\phi\left(  \frac{x_{t+1}^{[j]}-\widehat{\mu
}_{t+1}^{[j]}}{\widehat{\sigma}_{t+1}^{[j]}}\right)  }. \label{UDPF_weight}%
\end{equation}
The corresponding UDPF likelihood estimator is calculated as per
(\ref{PF_likelihood}). Pseudo code for the UDPF is provided in Algorithm 2 in
Appendix A.1, and proof of the unbiasedness of both data-driven filters is
given in Appendix A.3.

\section{Simulation Experiments\label{sim_ex}}

In this section, two different simulation exercises are undertaken. In the
first experiment we investigate, in a controlled setting, the performance of
the five filters, each described in Section \ref{DPF_alg}, in terms of their
ability to produce the posterior distribution for the model parameters. In
particular, in Section \ref{emp_app copy(1)}, we document both the
computational time and mixing performance of the relevant PMMH algorithm, for
three different state space models: the linear Gaussian (LG) model that is
foundational to all state space analysis, and two non-linear, non-Gaussian
models that feature in the empirical finance literature, namely the stochastic
conditional duration (SCD) and\textbf{ }SV specifications referenced in the
Introduction. Notably, different signal-to-noise ratios are entertained for
all three models. In Section \ref{forecast perf}, the forecast accuracy of the
estimate of (\ref{forecast}) yielded by each filtering method is considered
under, respectively, correct and incorrect specification of the true DGP.
Critically, having documented the differential performance of the alternative
filters in Section \ref{emp_app copy(1)}, we then deliberately hold both the
number of particles and the number of MCMC draws fixed in Section
\ref{forecast perf}, in order to guage the impact, or otherwise, of this
differential performance on forecast accuracy.

\subsection{Simulation design and PMMH evaluation methods\label{eval}}

Before detailing the specific design scenarios adopted for the simulation
exercises, we first define the signal-to-noise ratio (SNR)\ as\emph{ }%
\begin{equation}
SNR=\sigma_{x}^{2}/\sigma_{m}^{2}, \label{snr1}%
\end{equation}
where $\sigma_{x}^{2}$ is the unconditional variance of the state variable,
which is available analytically in all cases considered. In the LG setting
$\sigma_{m}^{2}$ corresponds directly to variance of the additive measurement
noise. In the two non-linear models,\textbf{ }a transformation of the
measurement equation is employed to enable the calculation of $\sigma_{m}^{2}%
$, now given by the variance of the (transformed) measurement error that
results, to be obtained either analytically (for the SV model) or using
deterministic integration (for the SCD model). The details of the relevant
transformations are provided in Sections \ref{sec_SCD} and \ref{sec_SV},
respectively.\textbf{ }The quantity in (\ref{snr1}) measures the strength of
signal relative to the background noise in the (appropriately transformed)
SSM, for given fixed parameter values.\footnote{A comparable quantity that is
applicable to the non-linear case is defined by $SNR^{\ast}=\sigma_{x}^{2}/V,$
where $V$ is the curvature of $\log(p(y_{t}|x_{t},\theta))$, see Giordani
\textit{et al}. (2011).}

A design scenario is defined by the combination of the model and corresponding
model parameter settings that achieve a given value for (\ref{snr1}), either
low or high. What constitutes a particular level for SNR is model-specific,
with values chosen (and reported below) that span the range of possible SNR
values that still accord with empirically plausible data. If a particular
design has a high SNR, this implies that observations are informative about
the location of the unobserved state. The DPF is expected to perform well in
this case, in terms of precisely estimating the (true) likelihood value, with
the impact of the use of multiple matching being of particular interest.
Conversely, as the BPF proposes particles from the state predictive
distribution, it is expected to have superior performance to the DPF when the
SNR is low. Exploiting both types of information at the same time, the UDPF,
APF and UPF methods are anticipated to be more robust to the SNR value.
However, when assessing PMMH performance and, ultimately forecast performance,
there is a more complex relationship between the filter performance and the
SNR of the DGP, given that estimation of the likelihood function takes place
across the full support of the unknown parameters.

The PMMH assessment draws on the insights of Pitt\textit{ \textit{et al.}}
(2012). If the likelihood is estimated precisely, the mixing of the Markov
chain will be as rapid as if the true likelihood were being used, and the
estimate of any posterior quantity will also be accurate as a consequence.
However, increasing the precision of the likelihood estimator by increasing
the number of particles used in the filter comes at a computational cost.
Equivalently, if a poor, but computationally cheap, likelihood estimator is
used within the PMMH algorithm, this will typically\textbf{ }slow the mixing
of the chain, meaning that for a given number of Markov chain iterates, the
estimate of any posterior quantity will be less accurate. Pitt\textit{
\textit{et al.}}\emph{ }suggest choosing the particle number that minimizes
the so-called computing time: a measure that takes into account both the cost
of obtaining the likelihood estimator and the speed of mixing of the MCMC
chain. They show that the `optimal' number of particles is that which yields a
variance for the likelihood estimator of 0.85, at the true parameter
vector.\footnote{Note, while the optimal number of particles may be computed
within a simulation context, as in the current section, implementation in an
empirical setting requires a preliminary estimate of the parameter (vector) at
which this computation occurs.}

For each design scenario (i.e. model and SNR level), and for each particular
filter, the PMMH algorithm is used to produce a Markov chain with $MH=110,000$
iterations, with the first $10,000$ iterations being discarded as burn-in. The
MCMC draws are generated from a random walk proposal, with the covariance
structure of the proposal\textbf{ }adapted using Algorithm 1 of M\"{u}ller
(2010).\emph{ }Determining the optimal number of particles, $N_{opt}$, for any
particular design scenario and for any specific filter, involves producing
$R_{0}$ independent replications of the likelihood estimate $\widehat{p}%
_{u}^{(r)}(y_{1:T}|\theta_{0}),$ each evaluated at the true parameter (vector)
$\theta_{0}$ and based on a selected number of particles, $N_{s}$. Then, the
optimum number of particles, denoted by $N_{opt}$, is chosen according
to\textbf{ }%
\begin{equation}
N_{opt}=N_{s}\times\frac{\widehat{\sigma}_{N_{s}}^{2}}{0.85}, \label{nopt}%
\end{equation}
where $\widehat{\sigma}_{N_{s}}^{2}$ denotes the variance of the $R_{0}$
likelihood estimates.\textbf{ }In other words, the (somewhat arbitrarily
selected) initial number of particles, $N_{s}$, is scaled\textbf{ }according
to the extent to which the precision that it is expected to yield (as
estimated by $\widehat{\sigma}_{N_{s}}^{2}$) varies from the value of $0.85$
that is sought. We track the time taken to compute the likelihood estimate at
each of the $MH$ iterations using the given filter with $N_{opt}$ particles.
We then record the average likelihood computing time (ALCT) over these
iterations, as well as the inefficiency factor (IF) for each parameter. In the
usual way, the IF for a given parameter can be interpreted as the sampling
variance of the mean of the correlated MCMC draws of that parameter relative
to the sampling variance of the mean of a hypothetical set of independent
draws. Values greater than unity thus measure the loss of precision (or
efficiency) incurred due to the dependence in the chain.

\subsection{Models, SNR ratios and priors}

In this section, we outline the three models used in the simulation
experiments. The parameter values and associated values for SNR are contained
in Table \ref{Simulation_parameter}.

\subsubsection{The linear Gaussian (LG) model}

The LG model is given by%
\begin{align}
y_{t}  &  =x_{t}+\sigma_{\eta}\eta_{t}\label{LL_meas}\\
x_{t}  &  =\rho x_{t-1}+\sigma_{v}v_{t}, \label{LL_state}%
\end{align}
with $\eta_{t}$ and $v_{t}$ mutually independent $i.i.d.$ standard normal
random variables. Data is generated using $\rho=0.4$ and $\sigma_{v}$ $=0.92$.
The value of $\sigma_{\eta}$ is set to achieve two values for SNR (low and
high), as recorded in Panel A of Table \ref{Simulation_parameter}. In the PMMH
exercises detailed in Section \ref{emp_app copy(1)}, where the parameters are
treated as unknown, the parameter $\theta=(\log(\sigma_{\eta}^{2}),\rho
,\log(\sigma_{v}^{2}))^{^{\prime}}$ is sampled\textbf{ }(thereby restricting
the simulated draws of $\sigma_{v}^{2}$ and $\sigma_{\eta}^{2}$ in the
resulting Markov chains to be positive), with a normal prior distribution
assumed as $\theta\sim N(\mu_{0},\Sigma_{0})$ with\textbf{ }$\mu_{0}%
=(\log(0.7),0.5,\log(0.475))^{\prime}$ and $\Sigma_{0}=I_{n}$. The same prior
is used in both high and low SNR settings, and is in the spirit of the prior
used in\textbf{ }Flury and Shephard (2011).%

\begin{table}[tbp] \centering
\caption{Parameters values used in the simulation exercises for the LG, SCD and SV models. The corresponding signal-to-noise ratio (SNR) for each scenario is shown in the bottom row.}\label{Simulation_parameter}%
\bigskip%

\begin{tabular}
[c]{cccccccccccc}\hline\hline
&  &  &  &  &  &  &  &  &  &  & \\
& \multicolumn{3}{c}{PANEL A: LG} &  & \multicolumn{3}{c}{PANEL B: SCD} &  &
\multicolumn{3}{c}{PANEL C: SV}\\\cline{1-4}\cline{5-12}
&  &  &  &  &  &  &  &  &  &  & \\
& Low &  & High &  & Low &  & High &  & Low &  & High\\\cline{2-4}%
\cline{2-4}\cline{6-8}\cline{10-12}
&  &  &  &  &  &  &  &  &  &  & \\
$\sigma_{\eta}$ & 2.24 &  & 0.45 & $\alpha$ & 0.67 &  & 6.67 & $\phi$ &
-6.61 &  & -4.24\\
$\rho$ & 0.40 &  & 0.40 & $\beta$ & 1.50 &  & 0.15 & $\rho$ & 0.2 &  & 0.6\\
$\sigma_{v}$ & 0.92 &  & 0.92 & $\phi$ & -1.1 &  & -1.1 & $\sigma_{v}$ &
0.70 &  & 1.40\\
&  &  &  & $\rho$ & 0.74 &  & 0.74 &  &  &  & \\
&  &  &  & $\sigma_{v}$ & 0.65 &  & 0.65 &  &  &  & \\
&  &  &  &  &  &  &  &  &  &  & \\
SNR & 0.2 &  & 5 &  & 0.5 &  & 10 &  & 0.1 &  & 0.6\\\hline\hline
\end{tabular}%
\end{table}%

\subsubsection{The stochastic conditional duration (SCD) model\label{sec_SCD}}

The SCD model is given by%
\begin{align}
y_{t}  &  =\exp(x_{t})\eta_{t}\label{scd_meas}\\
x_{t}  &  =\phi+\rho x_{t-1}+\sigma_{v}v_{t}, \label{scd_sate}%
\end{align}
with $v_{t}\sim i.i.d.N(0,1)$ independent of $\eta_{t},$ and with $\eta_{t}$
being\textbf{ }$i.i.d.$ from a gamma distribution\textbf{ }with shape
parameter $\alpha$ and rate parameter $\beta$. Taking the logarithms of both
sides of (\ref{scd_meas}) yields a transformed measurement equation that is
linear in the state variable\textbf{ }$x_{t}$, i.e. $\log(y_{t})=x_{t}%
+\varepsilon_{t},$ where $\varepsilon_{t}=\log(\eta_{t})$. The value
of\textbf{ }$\sigma_{m}^{2}=var(\varepsilon_{t})$ required to report the SNR
in Table \ref{Simulation_parameter} is obtained numerically. The initial state
is taken as the long run distribution of the state implied by choosing
$\left\vert \rho\right\vert <1$, that is\textbf{ }$x_{0}\sim N\left(
\frac{\phi}{1-\rho},\frac{\sigma_{v}^{2}}{(1-\rho)^{2}}\right)  $. For the
PMMH exercise detailed in Section \ref{emp_app copy(1)}, the parameter
vector\textbf{ }$\theta=(\log(\alpha),\log(\beta),\phi,\rho,\log(\sigma
_{v}^{2}))$\emph{ }is used to ensure the positivity of draws\textbf{ }for each
$\alpha$, $\beta$ and $\sigma_{v}^{2}$. As with the LG setting, a normal prior
is\textbf{ }adopted with $\theta\sim N(\mu_{0},\Sigma_{0})$, but now\textbf{
}with $\mu_{0}=(-0.8,0.5,\log(0.5),\log(2),\log(1))^{\prime}$ and $\Sigma
_{0}=I_{n}.$ This prior is again held constant over the two SNR settings (low
and high) used to assess PMMH performance.

\subsubsection{The stochastic volatility (SV) model\label{sec_SV}}

The SV\ model is given by%
\begin{align}
y_{t}  &  =\exp(x_{t}/2)\eta_{t}\label{SV_meas}\\
x_{t}  &  =\phi+\rho x_{t-1}+\sigma_{v}v_{t}, \label{SV_state}%
\end{align}
with $\eta_{t}$ and $v_{t}$ once again mutually independent sequences of
$i.i.d.$ standard normal random variables. To fix the SNR, the\textbf{
}measurement equation is transformed to\textbf{ }$\log(y_{t}^{2}%
)=x_{t}+\varepsilon_{t},$ where $\varepsilon_{t}=\log(\eta_{t}^{2})$. In this
case,\textbf{ }$var(\varepsilon_{t})=4.93$, corresponding to the quantity
$\sigma_{m}^{2}$ in (\ref{snr1}).\emph{ }The initial state distribution is
specified from the stationary distribution, with\textbf{ }$x_{0}\sim
N(\frac{\phi}{1-\rho},\frac{\sigma_{v}^{2}}{(1-\rho)^{2}})$. For the PMMH
exercise, a normal prior is adopted for $\theta=(\phi,\rho,\log(\sigma_{v}%
^{2}))^{^{\prime}}$, with $\theta\sim N(\mu_{0},\Sigma_{0})$, where $\mu
_{0}=(-4.6,0.8,\log(0.5))^{\prime}$ and $\Sigma_{0}=I_{n}$. This prior
is\textbf{ }used under both SNR settings.

\subsection{\textbf{Filter implementation details}}

The DPF and the UPDF are explained in detail in Sections\ \ref{Section_DPF}
and\ \ref{Section_UDPF}, respectively.\ The DPF is implemented with both $L=1$
and $L=30$ matches. Implementation of the BPF is standard, with details
available from many sources (e.g. Gordon\emph{ }\textit{\textit{et al.}},
1993, and Creal, 2012). The APF, on the other hand, may be implemented in a
variety of different ways, depending upon the model structure and the
preference of the analyst. For the models considered in this paper, so-called
full adaptation is feasible (only) for the LG model, and hence we report
results for this version of the filter (referred to as FAPF hereafter) in that
case. For all three models, we also experimented with an alternative version
of APF (in which full adaptation is not exploited) in which the proposal
distribution is given by $g\left(  x_{t+1},k|x_{t},y_{1:t+1},\theta\right)
=p(y_{t+1}|\mu(x_{t}^{[k]},\theta))p(x_{t+1}|x_{t}^{[k]},\theta),$ where
$\mu(x_{t}^{[k]})$ is the conditional mean $E(x_{t+1}|x_{t}^{[k]},\theta)$,
and $k$ is a discrete auxiliary variable (see Pitt and Shephard, 1999, for
details). For the SV model, we explored a third version of APF based on a
second order Taylor's series expansion of $\log(p(y_{t+1}|x_{t+1},\theta))$
around the maximum of the measurement density. This approach yields an
approximation of the likelihood component, denoted by $g(y_{t+1}%
|x_{t+1},\theta)$, which is then used to form a proposal distribution,
$g\left(  x_{t+1},k|x_{t},y_{1:t+1},\theta\right)  =g(y_{t+1}|x_{t+1}%
,\mu(x_{t}^{[k]},\theta))p(x_{t+1}|x_{t}^{[k]},\theta)$. (For more details,
see Pitt and Shephard, 1999, and Smith and Santos, 2006.) For the particular
non-linear models explored here, however, both non-fully-adapted APF methods
resulted in very unstable likelihood estimates. Hence, these filters were not
pursued further in either the documentation of PMMH performance results or the
production of PMMH-based forecast distributions.\footnote{Further details on
these results can be obtained from the authors on request.}

As is standard knowledge, the KF is a set of recursive equations suitable for
the LG model that enable calculation of the first two moments of the
distribution of the unobserved state variables given progressively observed
measurements.\emph{ }In a non-linear setting, the unscented Kalman filter uses
approximate Gaussian distributions obtained from the unscented transformations
applied within the recursive KF structure, to approximate each of the
(non-Gaussian) filtered state distributions. In contrast, the UPF that is
implemented in our setting, uses approximate Gaussian distributions for the
proposal distributions in (\ref{g}) with moments produced by the unscented
transformations, and\textbf{ }with the conditioning on each new observation
$y_{t+1}$ obtained\textbf{ }as if the model were an LG model with moments that
match those of the conditional distributions defined by $p\left(
y_{t+1}|x_{t+1},\theta\right)  $ and $p\left(  x_{t+1}|x_{t},\theta\right)
$.\textbf{ }Further discussion of the UPF is provided in van de Merwe\textit{
\textit{et al.}} (2000).

\subsection{PMMH performance: Simulation results\label{emp_app copy(1)}}

At each MCMC\ iteration, the particle filter based on $N_{opt}$ is used to
estimate the likelihood function conditional on the set of parameter values
drawn at that iteration. The value of $N_{opt}$, however, is determined (via
the preliminary exercise described in Section \ref{eval}, with $R_{0}=100$
replications and $N_{s}=1000$ particles) at the true parameter values only
and, hence, is influenced by the SNR associated with the true data generating
process. Thus, when considering the performance of the filters within an MCMC
algorithm two things are required: 1) efficient performance at the SNR for the
true data, leading to a small value of $N_{opt}$; plus 2) some robustness in
performance to the SNR, since the movement across the parameter space (within
the chain) effectively changes the SNR under which the likelihood function is
computed at each point. A small value of $N_{opt}$ will, \textit{ceteris
paribus}, tend to produce a small value for the ALCT and, thus, ease the
computational burden. However, a lack of robustness of the filter will lead to
inaccurate likelihood estimates and, hence poor mixing in the chain.\textbf{
}Both the ALCT and the IF thus need to be reported for each filter, and for
each model, with the preferable filter being that which yields acceptable
mixing performance in a\textbf{ }reasonable time across for all three models.
The results documented in this section are based on a sample size of $T=250$,
reflecting the need for at least a moderate sample size when comparing the
performance of competing \textit{inferential} algorithms in a state space setting.%

\begin{table}[tbp] \centering
\caption{LG model: The optimal number of particles, average likelihood computing time (ALCT) and the inefficiency factor (IF) are reported for the PMCMC algorithm using DPF (with $L=1$ and 30 matches), BPF, UPF, UDPF and FAPF to produce the likelihood estimator. Data is simulated from the model in (22) and (23) with SNR = 0.2 in the top panel and SNR = 5 in the bottom panel.}\label{PMCMC_result}%

\begin{tabular}
[c]{llllllll}\hline\hline
&  &  &  &  &  &  & \\
\multicolumn{8}{c}{PMMH results under a low SNR setting}\\
&  &  &  &  &  &  & \\
&  & $N_{opt}$ & ALCT &  & \multicolumn{3}{c}{IF}\\\cline{3-4}\cline{6-8}
&  &  &  &  & \multicolumn{1}{c}{$\sigma_{\eta}^{2}$} &
\multicolumn{1}{c}{$\rho$} & \multicolumn{1}{c}{$\sigma_{v}^{2}$}\\
DPF ($L=1$) &  & 379 & 0.100 &  & \multicolumn{1}{r}{241.4} &
\multicolumn{1}{r}{325.2} & \multicolumn{1}{r}{248.6}\\
DPF ($L=30$) &  & 348 & 0.647 &  & \multicolumn{1}{r}{300.2} &
\multicolumn{1}{r}{321.3} & \multicolumn{1}{r}{313.1}\\
BPF &  & 18 & 0.017 &  & \multicolumn{1}{r}{184.5} & \multicolumn{1}{r}{67.7}
& \multicolumn{1}{r}{178.6}\\
UPF &  & 57 & 0.057 &  & \multicolumn{1}{r}{128.2} & \multicolumn{1}{r}{88.3}
& \multicolumn{1}{r}{144.5}\\
UDPF &  & 4 & 0.065 &  & \multicolumn{1}{r}{134.2} & \multicolumn{1}{r}{63.3}
& \multicolumn{1}{r}{149.9}\\
FAPF &  & 2 & 0.022 &  & \multicolumn{1}{r}{163.8} & \multicolumn{1}{r}{98.4}
& \multicolumn{1}{r}{181.6}\\\hline
&  &  &  &  &  &  & \\
\multicolumn{8}{c}{PMMH results under a\ high SNR setting}\\
&  & $N_{opt}$ & ALCT &  & \multicolumn{3}{c}{IF}\\\cline{3-4}\cline{6-8}
&  &  &  &  & \multicolumn{1}{c}{$\sigma_{\eta}^{2}$} &
\multicolumn{1}{c}{$\rho$} & \multicolumn{1}{c}{$\sigma_{v}^{2}$}\\
DPF ($L=1$) &  & 168 & 0.084 &  & \multicolumn{1}{r}{33.3} &
\multicolumn{1}{r}{31.9} & \multicolumn{1}{r}{33.3}\\
DPF ($L=30$) &  & 143 & 0.323 &  & \multicolumn{1}{r}{33.1} &
\multicolumn{1}{r}{33.6} & \multicolumn{1}{r}{35.9}\\
BPF &  & 2750 & 0.254 &  & \multicolumn{1}{r}{20.9} & \multicolumn{1}{r}{20.0}
& \multicolumn{1}{r}{20.4}\\
UPF &  & 70 & 0.066 &  & \multicolumn{1}{r}{31.6} & \multicolumn{1}{r}{31.6} &
\multicolumn{1}{r}{32.2}\\
UDPF &  & 23 & 0.060 &  & \multicolumn{1}{r}{37.4} & \multicolumn{1}{r}{35.5}
& \multicolumn{1}{r}{39.9}\\
FAPF &  & 11 & 0.035 &  & \multicolumn{1}{r}{35.0} & \multicolumn{1}{r}{35.4}
& \multicolumn{1}{r}{39.8}\\\hline\hline
\end{tabular}%
\end{table}%
%

\begin{table}[tbp] \centering
\caption{SCD model: The optimal number of particles, average likelihood computing time (ALCT) and the inefficiency factor (IF) are reported for the PMCMC algorithm using DPF (with $L=1$ and 30 matches), BPF, UPF and UDPF to produce the likelihood estimator. Data is simulated from the model in (24) and (25) with SNR = 0.5 in the top panel and SNR = 10 in the bottom panel.}\label{PMCMC_result_SCD}%

\begin{tabular}
[c]{llllllllll}\hline\hline
&  &  &  &  &  &  &  &  & \\
\multicolumn{10}{c}{PMMH results under a low SNR setting}\\
&  &  &  &  & \multicolumn{1}{r}{} & \multicolumn{1}{r}{} &
\multicolumn{1}{r}{} & \multicolumn{1}{r}{} & \multicolumn{1}{r}{}\\
&  & $N_{opt}$ & ALCT &  & \multicolumn{5}{c}{IF}\\\cline{3-4}\cline{6-10}
&  &  &  &  & \multicolumn{1}{c}{$\phi$} & \multicolumn{1}{c}{$\rho$} &
\multicolumn{1}{c}{$\sigma_{v}^{2}$} & \multicolumn{1}{c}{$\alpha$} &
\multicolumn{1}{c}{$\beta$}\\
DPF ($L=1$) &  & 1469 & 0.346 &  & 50.3 & 61.4 & 34.7 & 39.3 & 52.8\\
DPF ($L=30$) &  & 1296 & 2.1 &  & 47.7 & 45.6 & 38.8 & 37.2 & 45.9\\
BPF &  & 184 & 0.064 &  & 58.3 & 57.3 & 46.1 & 40.0 & 42.6\\
UPF &  & 398 & 0.239 &  & 55.2 & 67.1 & 49.6 & 42.3 & 66.3\\
UDPF &  & 119 & 0.065 &  & 45.9 & 55.9 & 52.3 & 36.8 & 55.1\\\hline
&  &  &  &  &  &  &  &  & \\
\multicolumn{10}{c}{PMMH results under a high SNR setting}\\
&  &  &  &  & \multicolumn{1}{r}{} & \multicolumn{1}{r}{} &
\multicolumn{1}{r}{} & \multicolumn{1}{r}{} & \multicolumn{1}{r}{}\\
&  & $N_{opt}$ & ALCT &  & \multicolumn{5}{c}{IF}\\\cline{3-4}\cline{6-10}
&  &  &  &  & \multicolumn{1}{c}{$\phi$} & \multicolumn{1}{c}{$\rho$} &
\multicolumn{1}{c}{$\sigma_{v}^{2}$} & \multicolumn{1}{c}{$\alpha$} &
\multicolumn{1}{c}{$\beta$}\\
DPF ($L=1$) &  & 177 & 0.060 &  & \multicolumn{1}{r}{55.6} &
\multicolumn{1}{r}{51.7} & \multicolumn{1}{r}{48.0} & \multicolumn{1}{r}{49.2}
& \multicolumn{1}{r}{44.4}\\
DPF ($L=30$) &  & 159 & 0.381 &  & \multicolumn{1}{r}{40.8} &
\multicolumn{1}{r}{43.9} & \multicolumn{1}{r}{47.2} & \multicolumn{1}{r}{48.9}
& \multicolumn{1}{r}{36.0}\\
BPF &  & 1011 & 0.218 &  & \multicolumn{1}{r}{45.8} & \multicolumn{1}{r}{44.6}
& \multicolumn{1}{r}{46.1} & \multicolumn{1}{r}{62.0} &
\multicolumn{1}{r}{34.5}\\
UDPF &  & 73 & 0.068 &  & 62.4 & 67.6 & 69.5 & 72.6 & 46.8\\
UPF &  & 118 & 0.113 &  & \multicolumn{1}{r}{65.1} & \multicolumn{1}{r}{58.7}
& \multicolumn{1}{r}{54.7} & \multicolumn{1}{r}{54.8} &
\multicolumn{1}{r}{54.6}\\\hline\hline
\end{tabular}%
\end{table}%
%

\begin{table}[tbp] \centering
\caption{SV model: The optimal number of particles, average likelihood computing time (ALCT) and the inefficiency factor (IF) are reported for the PMCMC algorithm using DPF (with $L=1$ and 30 matches), BPF, UPF and UDPF to produce the likelihood estimator. Data is simulated from the model in (26) and (27) with SNR = 0.1 in the top panel and SNR = 0.6 in the bottom panel.}\label{PMCMC_result_SV}%

\begin{tabular}
[c]{llllllrrr}\hline\hline
&  &  &  &  &  &  &  & \\
\multicolumn{9}{c}{PMMH results under a low SNR setting}\\
&  &  &  &  &  &  &  & \\
&  & $N_{opt}$ & ALCT &  &  & \multicolumn{3}{c}{IF}\\\cline{3-4}\cline{7-9}
&  &  &  &  &  & \multicolumn{1}{c}{$\phi$} & \multicolumn{1}{c}{$\rho$} &
\multicolumn{1}{c}{$\sigma_{v}^{2}$}\\
DPF (L=1) &  & 1767 & 0.534 &  &  & 22.53 & 22.42 & 19.07\\
DPF (L=30) &  & 1548 & 3.37 &  &  & 21.39 & 21.38 & 18.59\\
BPF &  & 275 & 0.067 &  &  & 14.82 & 14.83 & 16.17\\
UPF &  & 244 & 0.133 &  &  & 22.35 & 22.32 & 21.35\\
UDPF &  & 245 & 0.085 &  &  & \multicolumn{1}{l}{14.55} &
\multicolumn{1}{l}{14.62} & \multicolumn{1}{l}{13.97}\\\hline
&  &  &  &  &  & \multicolumn{1}{l}{} & \multicolumn{1}{l}{} &
\multicolumn{1}{l}{}\\
\multicolumn{9}{c}{PMMH results under a\ high\textbf{ }SNR setting}\\
&  &  &  &  &  & \multicolumn{1}{l}{} & \multicolumn{1}{l}{} &
\multicolumn{1}{l}{}\\
&  & $N_{opt}$ & ALCT &  &  & \multicolumn{3}{c}{IF}\\\cline{3-4}\cline{7-9}
&  &  &  &  &  & \multicolumn{1}{c}{$\phi$} & \multicolumn{1}{c}{$\rho$} &
\multicolumn{1}{c}{$\sigma_{v}^{2}$}\\
DPF ($L=1$) &  & 1013 & 0.265 &  &  & 14.85 & 15.39 & 15.45\\
DPF ($L=30$) &  & 915 & 1.889 &  &  & 17.59 & 18.25 & 16.39\\
BPF &  & 605 & 0.104 &  &  & 16.86 & 16.62 & 14.94\\
UPF &  & 686 & 0.256 &  &  & 16.41 & 16.53 & 18.84\\
UDPF &  & 568 & 0.156 &  &  & \multicolumn{1}{l}{12.55} &
\multicolumn{1}{l}{12.91} & \multicolumn{1}{l}{13.81}\\\hline\hline
\end{tabular}%
\end{table}%

The PMMH results for the LG model are presented in Table \ref{PMCMC_result}.
As is consistent with expectations, under the high SNR setting, the optimum
number of particles for the BPF is much larger than that for the DPF. This
then translates into higher values for ALCT for the BPF than for the DPF, when
a single match only ($L=1$) is used. Further reduction in $N_{opt}$ is yielded
via the multiple matching ($L=30$), via the extra precision that is produced
from the averaging process. However, this comes at a distinct cost in
computational time, with the gain of the DPF over the BPF, in terms of ALCT,
lost as a consequence. In the low SNR setting, also as anticipated, the basic
DPF (for either value of $L$) does not produce gains over the BPF, either in
terms of $N_{opt}$ or ALCT.

In contrast to the variation in the performance of the DPF - relative to the
BPF - over the SNR settings, the UDPF is uniformly superior to the BPF in
terms of $N_{opt}$, with the increase in computational cost associated with
the likelihood estimation (as a consequence of having to perform the unscented
transformations) resulting in only a slightly larger value for ALCT (relative
to that for the BPF) in the low SNR case. Moreover, the UDPF yields very
similar values of $N_{opt}$ to the analytically\textbf{ }available FAPF and
values for ALCT that are not much higher. The values of $N_{opt}$ for the UDPF
are also much lower than those for the UPF, with ALCT being only slightly
larger for the former in the low SNR case.

As one would anticipate, given that $N_{opt}$ for each filter is deliberately
selected to ensure a given level of accuracy in the estimation of the
likelihood (albeit at the true parameter values only), the variation in the
IFs (for any given parameter) across the different filters is not particularly
marked. That said, there are still some differences, with the UDPF, along with
the UPF, being the best performing filters overall, when both SNR scenarios in
this LG setting are considered, and the DPF (for both values of $L$) being the
most inefficient filter in the low SNR case.

The PMMH results for the SCD and SV models are presented in Table
\ref{PMCMC_result_SCD} and \ref{PMCMC_result_SV} respectively. Both sets of
results are broadly similar to those for the LG model in terms of the relative
performance of the methods, remembering that the FAPF is not applicable in the
non-linear case and all other versions of the APF are eschewed due to the poor
likelihood estimation results cited earlier. For the SCD model, the
conclusions drawn above regarding the relative performance of the BPF and DPF
filters apply here also. In this case, however, when all three factors:
robustness to SNR, ALCT value and IF value are taken into account, the UDPF is
uniformly superior to all other filters. For the SV model, as the `high' SNR
value appears relatively small, set as such to ensure that the model produces
empirically plausible data, the DPF has less of a comparative advantage over
the BPF. However, the UDPF is competitive with the (best performing) BPF in
both settings, according to ALCT, and is uniformly superior to all other
filters according to the IF values.

Overall then, when robustness to SNR, computation time and chain performance
are all taken into account the UDPF is the preferred choice for the
experimental designs considered here. We now address the question of\textbf{
}what difference, if any, this superiority in (algorithmic) performance makes
at the forecasting level.

\subsection{Forecast performance: Simulation results\label{forecast perf}}

The impact of the particle filter on forecasting is explored in the context of
estimating the SV model described previously in Section \ref{sec_SV}. We
\textit{simulate} the data, however, under two different scenarios - one where
the SV model is correctly specified, and the other where the SV model does not
correspond to the true DGP. In the first case, data are simulated under the SV
model in (\ref{SV_meas}) and (\ref{SV_state}), using the low SNR setting shown
in Panel C of Table \ref{Simulation_parameter}. In this second case, the data
follows (a discrete approximation to) a bivariate jump diffusion process, with
independent random jumps sporadically occurring, in the price and/or
volatility process (see Duffie, Pan and Singleton, 2000). In this context, a
price jump relates to large observed deviation from the expected return,
whereas a volatility jump corresponds to an unusually large deviation in the
underlying volatility process. We refer to this DGP as the stochastic
volatility with independent jumps (SVIJ) model.

According to the SVIJ model the observed value $y_{t}$ is generated as
\begin{align}
y_{t}  &  =\sqrt{x_{t}}\zeta_{t}^{p}+Z_{t}^{p}\Delta N_{t}^{p}\label{svij1}\\
x_{t}  &  =\kappa\theta+(1-\kappa)x_{t-1}+\sigma_{v}\sqrt{x_{t}}\zeta_{t}%
^{x}+Z_{t}^{x}\Delta N_{t}^{x}, \label{svij2}%
\end{align}
with $\zeta_{t}^{p}$ and $\zeta_{t}^{x}$ being independent sequences of
$i.i.d.$ standard normal random variables. The jump components of the
measurement equation and the state equation are both composed from two
separate parts: jump occurrence and size. The jump occurrence sequences have
elements denoted by $\Delta N_{t}^{p}$ and $\Delta N_{t}^{x},$ respectively,
and are each independent $i.i.d.$ Bernoulli random variables taking the value
one with $15\%$ and $20\%$ probability, respectively. The size of each jump in
the state equation, $Z_{t}^{x},$ is generated from an exponential
distribution, with a mean value of 0.02. The size of a price jump, $Z_{t}^{p}%
$, is generated as $Z_{t}^{p}=S_{t}\exp\left(  M_{t}\right)  $, where $S_{t}$
is equal to either $+1$ or $-1$ with equal probability, and where $M_{t}$ is
$i.i.d.$ from a normal distribution with mean zero and variance equal to 0.5.
These numerical values were selected to accord with estimated values obtained
from the empirical study of the S\&P 500 index undertaken in Maneesoonthorn,
Forbes and Martin (2017).

In both the correctly and incorrectly specified cases, a single time series of
length $T=750$ is generated, with the first $500$ observations used to produce
competing PMMH-based estimates of the posterior distribution $p\left(
\theta|y_{1:500}\right)  $, where $\theta=(\phi,\rho,\sigma_{v}^{2})^{\prime}%
$, i.e. the parameters of the estimated SV model. At each PMMH iteration, a
candidate draw of the transformed parameter, $\widetilde{\theta}=(\phi
,\rho,\log(\sigma_{v}^{2}))^{\prime}$, is generated from a random walk
proposal. The prior distribution of $\widetilde{\theta}$ is specified as:
$\phi\sim N(0,10),$ $\rho\sim Beta(20,1.5),$ $\log(\sigma_{v}^{2})\sim
N(0,10).$\textbf{ }For each of the four remaining\textbf{ }alternative
filters, DPF (with $L=1$), UDPF, BPF and UPF, $N=300$ particles are used to
estimate the likelihood function at each PMMH iteration, and $MH=5000$
iterations drawn. By holding the number of particles used in each filter fixed
at a common value, the resulting efficiency associated with the estimated
likelihood function, and hence the PMMH algorithm itself (for a given number
of MCMC draws), will be different. The impact of controlling the number of
particles on the IFs of the resultant Markov chains is evident in Table
\ref{IF}, where it is clear that some Markov chains are more efficient than
others. Notably, the DPF is relatively inefficient compared to the other three
filters and, overall, the UDPF continues to exhibit the superior performance
documented in the previous section.%

\begin{table}[tbp] \centering
\caption{The inefficiency factors obtained from the PMMH algorithms, with the DPF, BPF, UPF and UDPF used to produce the likelihood estimator. The data is simulated from the SV model (columns 2-4) and the SVIJ model (columns 5-7).}\label{IF}%
\begin{tabular}
[c]{lllllllll}\hline\hline
&  &  &  &  &  &  &  & \\
\multicolumn{9}{c}{Inefficiency factors}\\
&  &  &  &  &  &  &  & \\
&  & \multicolumn{3}{c}{SV} &  & \multicolumn{3}{c}{SVIJ}\\\cline{3-8}%
\cline{3-9}
&  & \multicolumn{1}{c}{$\phi$} & \multicolumn{1}{c}{$\rho$} &
\multicolumn{1}{c}{$\sigma_{v}^{2}$} &  & \multicolumn{1}{c}{$\phi$} &
\multicolumn{1}{c}{$\rho$} & \multicolumn{1}{c}{$\sigma_{v}^{2}$}\\
DPF &  & 386.51 & 356.58 & 370.14 &  & 354.02 & 353.60 & 301.93\\
BPF &  & 23.59 & 23.56 & 20.55 &  & 20.81 & 20.77 & 18.67\\
UPF &  & 80.19 & 80.46 & 72.49 &  & 40.72 & 10.78 & 32.90\\
UDPF &  & 22.95 & 23.01 & 22.13 &  & 18.79 & 18.82 & 16.06\\\hline\hline
\end{tabular}
\end{table}%

We then proceed to estimate the competing\ (marginal) one-step-ahead forecast
distributions, for each of $250$ subsequent periods, each time following the
procedure described in Section \ref{forecasting}.\footnote{Due to the
intensive nature of the PMMH algorithm, the posterior draws of $\theta$ are
refreshed only after 50 forecast periods.} Due to the singularity of the
square root at zero, we produce forecast distributions for the transformed
measurement $\log\left(  y_{T+k}^{2}\right)  $. Having produced these
competing forecast distributions, their performance is measured using the
average log score, which we denote by ALS and calculate as the average of the
logarithm of each estimated predictive density $\widehat{p}\left(  \log\left(
y_{T+k}^{2}\right)  |y_{1:T+k-1}\right)  $, evaluated at the subsequently
`realised' value, $\log\left(  \left(  y_{T+k}^{obs}\right)  ^{2}\right)  $,
for $k=1,2,...,250$. We also compute the average absolute difference between
the log score produced under the BPF and that produced by each of the other
three filters. We denote this average absolute difference in log scores by ADLS.

Despite the four filters having quite different inefficiency factors, we find
that the forecast accuracy is virtually unaffected by which filter is used,
and irrespective of whether the fitted model is correctly or incorrectly
specified. Figures \ref{forecast_dist_w_fix} and \ref{forecast_dist_jump_fix}
illustrate the estimated forecast forecast distributions of the first
out-of-sample period (i.e. with $T=500$ and $k=1$), when the data are
generated by the SV DGP and the SVIJ DGP, respectively. The top panels in
Figures \ref{forecast_dist_w_fix} and \ref{forecast_dist_jump_fix} display all
of the estimated conditional one-step-ahead forecast distributions associated
with each of the filters, i.e. all of the $\widehat{p}\left(  \log\left(
y_{T+k}^{2}\right)  |y_{1:T+k-1},\theta^{(i)}\right)  $ for $i=1,2,...,5000$
MH iterations and for each filter. While all of the conditional forecast
distributions produced by each of the four filters appear to be centered
around a similar location, the DPF and UPF produce more varied conditional
forecasts than do either the BPF or UDPF. However, since the marginal
one-step-ahead forecast distribution is produced by integrating out the
uncertainty associated with the unknown parameters, much of the variation
between the conditional forecast distributions is eliminated through the
averaging procedure. Hence, the competing\ estimated marginal one-step-ahead
forecast distributions, shown in the bottom panels of Figures
\ref{forecast_dist_w_fix} and \ref{forecast_dist_jump_fix} for the correctly
and incorrectly specified SV models, respectively, are visually
indistinguishable from each other. In addition, as shown in Table
\ref{average_ls}, we find no difference (to two decimal places) in the ALS
produced from the $250$ one-step-ahead forecasts, with similarly\textbf{
}negligible results obtained for the ADLS.%

\begin{table}[tbp] \centering
\caption{ALS and ADLS, obtained from estimating the SV model using PMMH with four different filters.
The simulated data is simulated from the the SV (columns 2 and 3) and SVIJ (columns 4 and 5) specification, respectively.}\label{average_ls}%
\bigskip%
\begin{tabular}
[c]{llllllll}\hline\hline
&  &  &  &  &  &  & \\
\multicolumn{8}{c}{Log score summaries}\\
&  & \multicolumn{5}{c}{} & \\
DGP &  & \multicolumn{2}{c}{SV} &  &  & \multicolumn{2}{l}{SVIJ}%
\\\cline{3-6}\cline{3-8}
&  &  &  &  &  &  & \\
&  & ALS & ADLS &  &  & ALS & ADLS\\
DPF &  & -2.2053 & 0.0240 &  &  & -2.1862 & 0.0298\\
BPF &  & -2.2071 & 0 &  &  & -2.1854 & 0\\
UPF &  & -2.2070 & 0.0047 &  &  & -2.1852 & 0.00352\\
UDPF &  & -2.2072 & 0.0032 &  &  & -2.1856 & 0.0035\\\hline\hline
\end{tabular}%
\end{table}%

\bigskip\begin{figure}[ptb]
\begin{center}
\includegraphics[angle=90,origin=c,scale=0.6]
{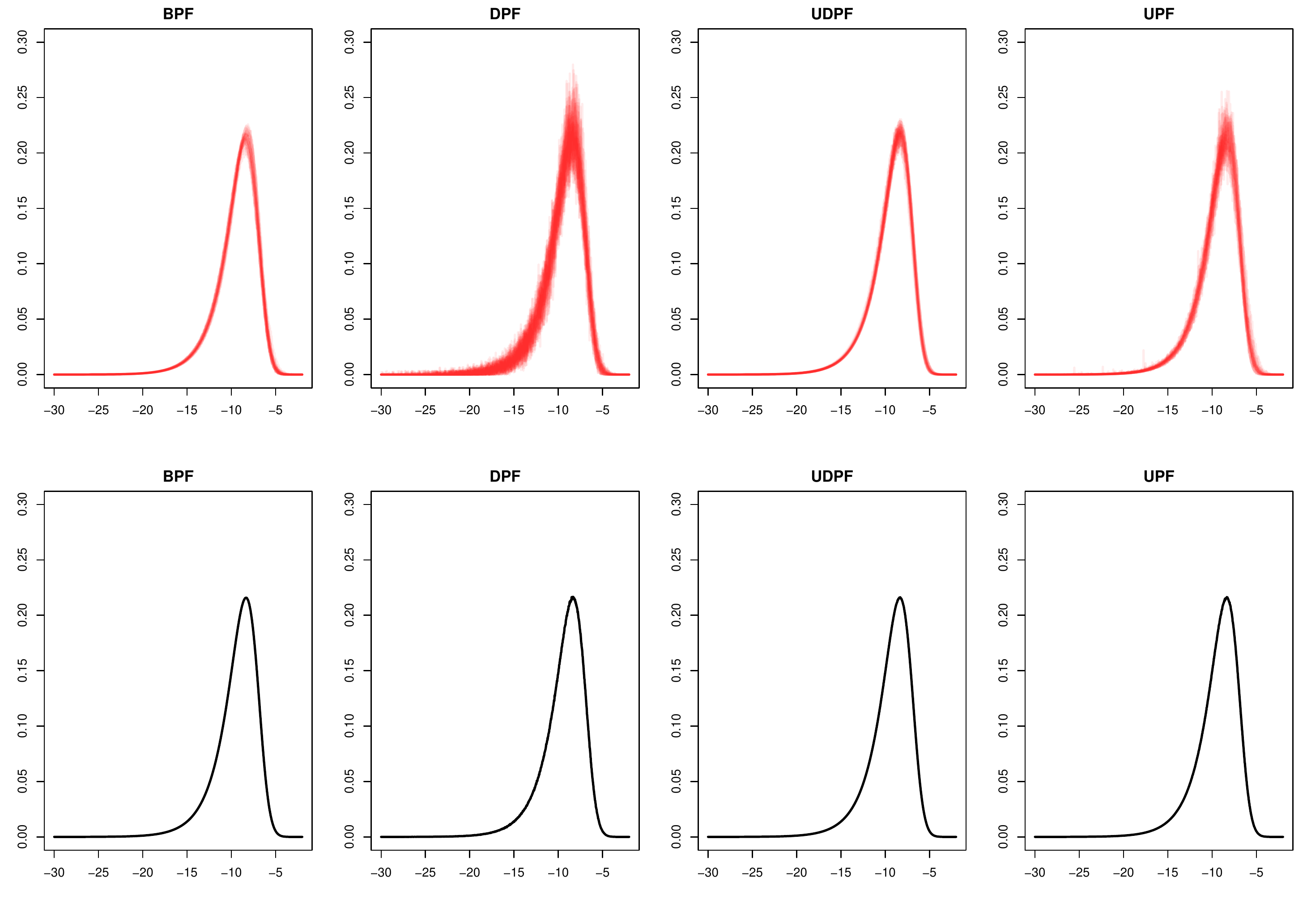}
\end{center}
\caption{The top panel shows all of the individual \textit{conditional
}one-step ahead forecast distributions of $\log(y_{501}^{2})$ produced by BPF
(in black), UPF (in green), DPF (in red) and UDPF (in blue), for the correctly
specified SV model, using simulated data. The bottom panel displays, for each
filter, the \textit{marginal} one-step ahead forecast distribution, which is
obtained using the average of the corresponding conditional forecasts shown in
the top panel. }%
\label{forecast_dist_w_fix}%
\end{figure}

\begin{figure}[ptb]
\begin{center}
\includegraphics[angle=90,origin=c,scale=0.6]
{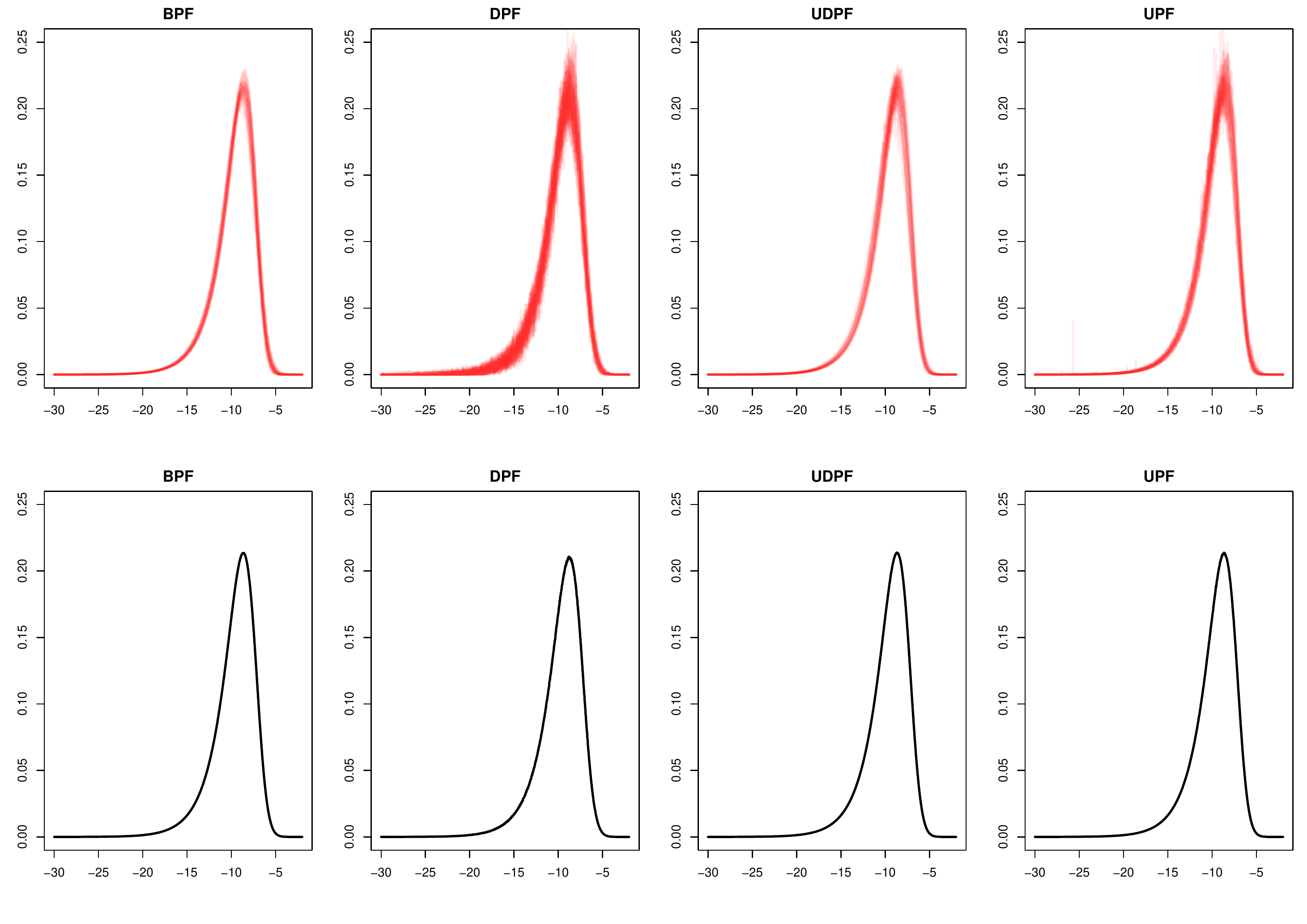}
\end{center}
\caption{The top panel shows all of the individual \textit{conditional
}one-step ahead forecast distributions of $\log(y_{501}^{2})$ produced by BPF
(in black), UPF (in green), DPF (in red) and UDPF (in blue), for the
incorrectly specified SV model, using simulated data. The bottom panel
displays, for each filter, the \textit{marginal} one-step ahead forecast
distribution, which is obtained using the average of the corresponding
conditional forecasts shown in the top panel. }%
\label{forecast_dist_jump_fix}%
\end{figure}

\section{Empirical Illustration\label{emp_section}}

In this section we consider the production of forecast distributions of
log-squared returns from an SV model for daily S\&P500 returns, based on data
from April 6, 2016 to April 2, 2019. The time series plot of the $754$
observations from the sample period, shown in Figure \ref{emp_SP500_data_fix},
suggests that a reasonably sophisticated model such as that in (\ref{svij1})
and (\ref{svij2}) may be appropriate. However, given the robustness of the
forecasts to model mis-specification (as documented above), we use the simpler
SV model in (\ref{SV_meas}) and (\ref{SV_state}) to produce the
forecasts.\ The prior distribution is the same as that detailed in Section
\ref{sec_SV}, and the forecasting performance of each of the filters is
produced using the PMMH procedure described in Section \ref{forecast perf},
with each filter implemented using $N=300$ particles.

The first $T=500$ observations are used to produce an initial one-step-ahead
marginal predictive distribution, corresponding to day $T+1=501$ (April 2,
2018). This process is then repeated for each subsequent period, using an
expanding in-sample window and resulting in a total of $254$ one-step-ahead
predictive distributions. With each out-of-sample predictive and corresponding
to each filter, a log score is produced. The corresponding forecast
performance for the predictive distributions produced using the different
filtering methods, as measured by ALS and ADLS, are reported in Table
\ref{average_ls emp}. These results show that the empirical forecast accuracy
yielded by the distinct filters is almost identical, confirming the robustness
of forecast performance to filter type documented above using simulation.

\begin{figure}[h]
\begin{center}
\includegraphics[scale=0.45]
{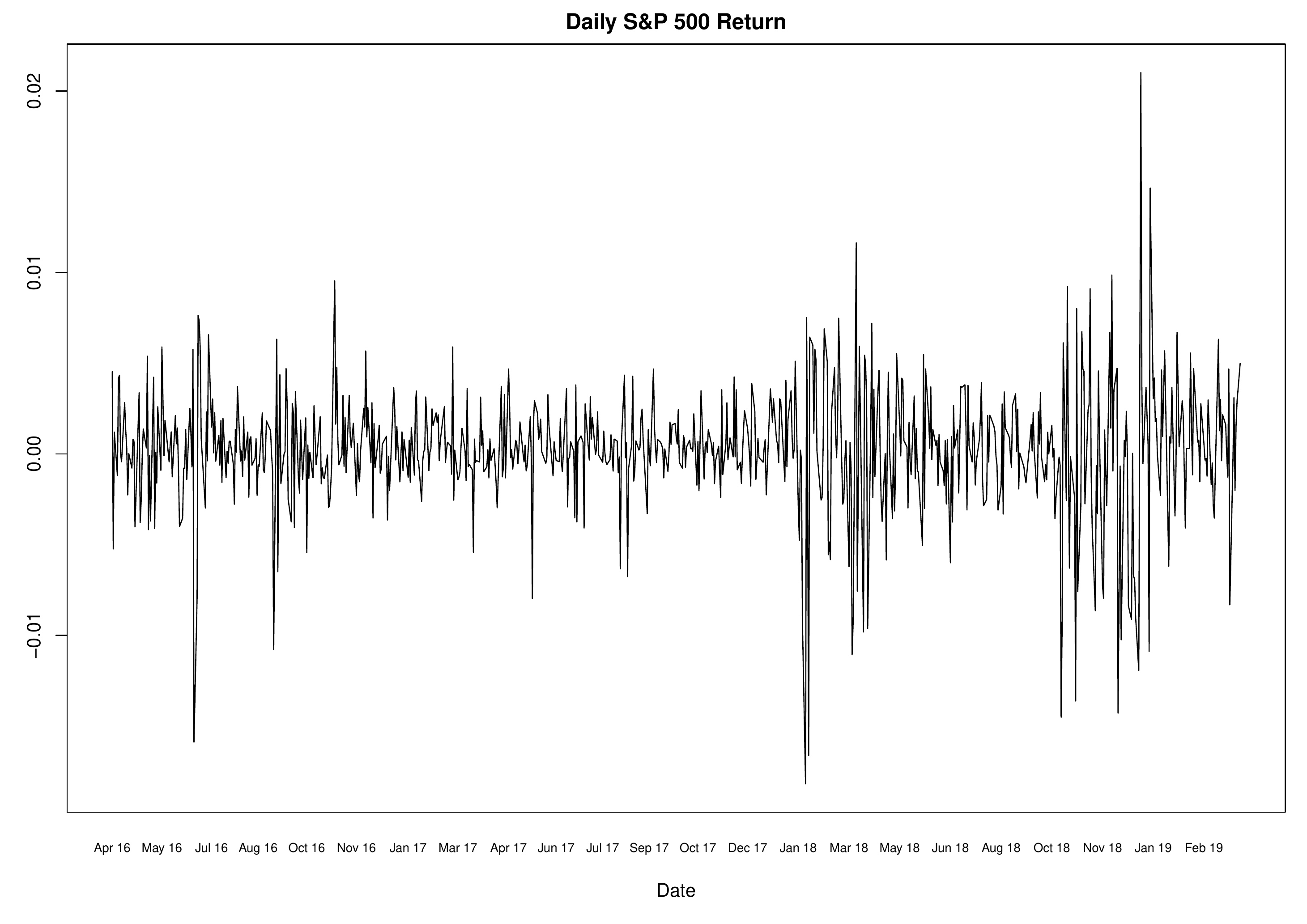}
\end{center}
\caption{Time series plot of the 754 daily S\&P 500 return data from
6-Apr-2016 to 2-Apr-2019.}%
\label{emp_SP500_data_fix}%
\end{figure}%

\begin{table}[tbp] \centering
\caption{ALS (column 2) and ADLS (column 3) resulting from 254 one-step ahead out of sample forecasts obtained from the SV model estimated using PMMH, with four different filters.}\label{average_ls emp}%
\bigskip%
\begin{tabular}
[c]{llll}\hline\hline
&  &  & \\
\multicolumn{4}{c}{Log score summaries}\\\hline
&  &  & \\
&  & ALS & ADLS\\
DPF &  & -2.2447 & 0.0342\\
BPF &  & -2.2389 & 0\\
UPF &  & -2.2471 & 0.0231\\
UDPF &  & -2.2381 & 0.0042\\\hline\hline
\end{tabular}%
\end{table}%

\section{Concluding remarks\label{conclude}}

This paper assesses the impact of filter choice on forecast accuracy in state
space models in which PMMH algorithms are used to estimate the predictive
distribution. To broaden the scope of the investigation, we complement certain
well-established particle filters with two new particle filtering algorithms,
namely the data-driven particle filter (DPF) and the unscented data-driven
particle filter (UDPF), each of which is shown to yield an unbiased estimator
of the likelihood function. In the context of several simulation studies and
an empirical illustration, we show that out-of-sample forecast performance is
largely insensitive to the choice of filtering method employed within the PMMH
algorithm, and for the purpose of undertaking the forward filtering step. This
finding holds irrespective of the correctness or otherwise of the model
specification, including in an empirical setting, where it is inevitable that
some degree of misspecification will prevail.

Of course, if parameter inference were the main focus, the choice of the
filter may matter, with the adapted UDPF shown to perform well across a range
of SNR settings (relative to the BPF, DPF and UPF), and to be on par with the
fully adapted auxiliary particle filter (FAPF) when the latter is available.
However, these and other caveats that may apply to the use of particle
filtering within a Bayesian \textit{inferential} algorithm, do not appear to
apply to forecast performance. Our results suggest that when it comes to
forecasting, the most appropriate decision is to use the simplest filter for
the model at hand, because, in the end: \textquotedblleft Any filter will
do!\textquotedblright.

\appendix\numberwithin{equation}{section}

\section{Appendices}

\subsection{Algorithms for implementing the DPF and UDPF}%

\begin{algorithm}{{\bf1}
The DPF with a pre-specified number of matches $L$, with $1 \leq L \leq N.$}{
\label{algo:DPF_algo}}
Generate ${{x}_{0}^{[j]}}%
$ from the initial state distribution $p(x_0)$, for $j=1,2,...,N$.\\
Set the normalized particle weights ${{\pi}_{0}^{[j]}={\frac{1}{N}}%
}, for j=1,2,...,N$.\\
\qfor$t=0,1,...,T-1:$\\
\hspace*{-0.3in}  Generate ${\eta}_{t+1}^{[j]} \stackrel{i.i.d.}{\sim}
p(\mathbf{\eta}_{t+1})$, for $j=1,2,...,N$.\\
\hspace*{-0.3in} Calculate the particle weights $w_{t+1}^{[j]}%
$ according to (\ref{Matched_DPF_weight}%
), for $j=1,2,...,N$. Note that when $L=1$ this is equivalent to (\ref
{DPF_weight}).\\
\hspace*{-0.3in} Calculate $\hat{p}_{u}(y_{t+1}|y_{1:t},\theta)$ using (\ref
{PF_likelihood}).\\
\hspace*{-0.3in} Calculate the normalized particle weights ${\pi_{t+1}^{[j]}%
}=\frac{{w_{t+1}^{[j]}}}{\sum_{k=1}^{N}{w_{t+1}^{[k]}}}$, for $j=1,2,...,N$.\\
\hspace*{-0.3in} Resample $N$ particles $ x_{t+1}^{[j]}%
$, using probabilities $\{\pi_{t+1}^{[k]}, k=1,2,...,N\}$.\\
\hspace*{-0.3in} Set $\pi_{t+1}^{[j]}={\frac{1}{N}}$.\qrof\end{algorithm}%
%

\begin{algorithm}{ {\bf2} The UDPF.}{
\label{algo:UDPF_algo}}
Generate ${{x}_{0}^{[j]}}%
$ from the initial state distribution $p(x_0)$, for $j=1,2,...,N$.\\
Set the normalized particle weights ${{\pi}_{0}^{[j]}={\frac{1}{N}}%
}, j=1,2,..,N$.\\
\qfor$t=0,1,...,T-1:$\\
\hspace*{-0.3in} Calculate $\hat{\mu}_{M,t+1}$ and $\hat{\sigma}_{M,t+1}%
^{2}$ according to (\ref{unscented_mean}) and (\ref{unscented_variance}).\\
\hspace*{-0.3in} Construct the UDPF proposal distribution using (\ref
{udpf_proposal}), for $j=1,2,...,N$.\\
\hspace*{-0.3in} Generate $x_{t+1}^{[j]}%
$ from the proposal distribution in Step 5, for $j=1,2,...,N.$\\
\hspace*{-0.3in} Calculate the particle weights $w_{t+1}^{[j]}%
, j=1,2,..,N$ according to (\ref{UDPF_weight}).\\
\hspace*{-0.3in} Calculate $\hat{p}_{u}(y_{t+1}|y_{1:t},\theta)$ using (\ref
{PF_likelihood}).\\
\hspace*{-0.3in} Calculate the normalized particle weights ${\pi_{t+1}^{[j]}%
}=\frac{{w_{t+1}^{[j]}}}{\sum_{i=1}^{N}{w_{t+1}^{[j]}}}$, for $j=1,2,...,N$.\\
\hspace*{-0.3in} Resample $N$ particles $x_{t+1}^{[j]}%
$, using probabilities $\{\pi_{t+1}^{[k]}, k=1,2,...,N\}$.\\
\hspace*{-0.3in} Set $\pi_{t+1}^{[j]}={\frac{1}{N}}$.\qrof\end{algorithm}%

\subsection{The use of the unscented transformations in the UDPF}

An unscented transformation is a quick and accurate procedure for calculating
the moments of a non-linear transformation of an underlying random variable.
The procedure involves choosing a set of points, called \textit{sigma points},
from the support of the underlying random variable. Once selected, these sigma
points are weighted to ensure that the first $M-1$ moments of\textbf{ }the
discrete\textbf{ }sigma point distribution\textbf{ }equal the first\textbf{
}$M-1$\textbf{ }moments of\textbf{ }the corresponding distribution of the
underlying random variable. The set of sigma points is then propagated through
the relevant non-linear function, from which the mean and variance of the
resulting normal approximation are obtained. The implied\textbf{ }moments
associated with the weighted transformed points can be shown to match the true
moments of the transformed underlying\textbf{ }random variable up to a
predetermined order of accuracy. (See Julier, Uhlmann and Durrant-Whyte, 1995, 2000.)

In the UDPF, the unscented transformation is applied to the function defined
by solving the measurement equation in (\ref{meas_fcn}) for the state variable
$x_{t+1}$.\textbf{ }We denote $\mu_{\eta}$ and $\sigma_{\eta}^{2}$
respectively as the expected value and the variance of the measurement error
$\eta_{t+1}.$ To calculate the mean and variance of the normal approximation
in (\ref{DUPF_likelihood}), sigma points $\eta^{\{k\}}$, with $k=1,...,M$, are
chosen to span the support of $\eta_{t+1}.$ The corresponding weights for each
each sigma point, $Q^{\{k\}}$, are determined to ensure that the first $M-1$
moments of the (discrete) distribution associated with the weighted sigma
points match the corresponding theoretical moments of the underlying
distribution, $p(\eta_{t+1})$.\emph{ }Accordingly, the sigma point\textbf{
}weights satisfy the following system of equations%
\[
\left\{
\begin{array}
[c]{c}%
\sum_{k=1}^{M}Q^{\{k\}}\\
\sum_{k=1}^{M}Q^{\{k\}}(\eta^{\{k\}}-\mu_{\eta})\\
\vdots\\
\sum_{k=1}^{M}Q^{\{k\}}(\eta^{\{k\}}-\mu_{\eta})^{M-1}%
\end{array}
=%
\begin{array}
[c]{c}%
1\\
E\left[  \eta_{t+1}-\mu_{\eta}\right] \\
\vdots\\
E\left[  (\eta_{t+1}-\mu_{\eta})^{M-1}\right]
\end{array}
\right.  .
\]
Note that, if the measurement errors have the same distribution for all $t$,
then the weighted sigma point distribution will also be the same for all $t$,
and hence will require calculation only once. This is the situation for all
models considered in the paper.

For implementation of the unscented transformations within the UPDF, let the
mean of the distribution whose density is proportional to the measurement
density be given by%
\[
\mu_{M,t+1}=\int_{-\infty}^{\infty}x_{t+1}C_{t+1}p(y_{t+1}|x_{t+1}%
,\theta)dx_{t+1},
\]
where $C_{t+1}=(%
{\textstyle\int}
p(y_{t+1}|x_{t+1},\theta)dx_{t+1})^{-1}$ represents the normalizing constant
that ensures a proper density. Further, using the representation of the
measurement density in (\ref{Dirac_inversion}),\textbf{ }we have%
\begin{equation}
\mu_{M,t+1}=\int_{-\infty}^{\infty}x_{t+1}C_{t+1}\int_{-\infty}^{\infty}%
p(\eta_{t+1})\left\vert \frac{\partial h}{\partial x_{t+1}}\right\vert
_{x_{t+1}=x_{t+1}(y_{t+1},\eta_{t+1})}^{-1}\delta_{x_{t+1}(y_{t+1},\eta
_{t+1})}d\eta_{t+1}dx_{t+1}.\nonumber
\end{equation}
Using then the discrete approximation of $p\left(  \eta_{t+1}\right)  $
implied by the weighted sigma points,\textbf{ }$\eta^{\{k\}}$\textbf{
}for\textbf{ }$k=1,2,...,M$,\emph{ }the mean of the measurement component as
calculated by the unscented transformation satisfies \emph{ }%
\begin{align}
\widehat{\mu}_{M,t+1}  &  =\int_{-\infty}^{\infty}x_{t+1}C_{t+1}\int_{-\infty
}^{\infty}\widehat{p}(\eta_{t+1})\left\vert \frac{\partial h}{\partial
x_{t+1}}\right\vert _{x_{t+1}=x_{t+1}(y_{t+1},\eta_{t+1})}^{-1}\delta
_{x_{t+1}(y_{t+1},\eta_{t+1})}d\eta_{t+1}dx_{t+1}\nonumber\\
&  =\int_{-\infty}^{\infty}x_{t+1}C_{t+1}\int_{-\infty}^{\infty}\left[
{\textstyle\sum\limits_{k=1}^{M}}
Q^{\{k\}}\delta_{\eta^{\{k\}}}\right]  \left\vert \frac{\partial h}{\partial
x_{t+1}}\right\vert _{x_{t+1}=x_{t+1}(y_{t+1},\eta_{t+1})}^{-1}\delta
_{x_{t+1}(y_{t+1},\eta_{t+1})}d\eta_{t+1}dx_{t+1}\nonumber\\
&  =\frac{%
{\textstyle\sum\limits_{k=1}^{M}}
Q^{\{k\}}\left\vert \frac{\partial h}{\partial x_{t+1}}\right\vert
_{\eta_{t+1}=\eta^{\{k\}}\text{, }x_{t+1}=x_{t+1}(y_{t+1},\eta^{\{k\}})}%
^{-1}x_{t+1}(y_{t+1},\eta^{\{k\}})}{%
{\textstyle\sum\limits_{j=1}^{M}}
Q^{\{j\}}\left\vert \frac{\partial h}{\partial x_{t+1}}\right\vert
_{\eta_{t+1}=\eta^{\{j\}}\text{, }x_{t+1}=x_{t+1}(y_{t+1},\eta^{\{j\}})}^{-1}%
}. \label{unscented_mean}%
\end{align}
Similarly, the variance of the measurement component as calculated by the
unscented transformation is given by%
\begin{equation}
\widehat{\sigma}_{M,t+1}^{2}=\frac{%
{\textstyle\sum\limits_{k=1}^{M}}
Q^{\{k\}}\left\vert \frac{\partial h}{\partial x_{t+1}}\right\vert
_{\eta_{t+1}=\eta^{\{k\}}\text{, }x_{t+1}=x_{t+1}(y_{t+1},\eta^{\{k\}})}%
^{-1}(x_{t+1}(y_{t+1},\eta^{\{k\}})-\widehat{\mu}_{M,t+1})^{2}}{%
{\textstyle\sum\limits_{j=1}^{M}}
Q^{\{k\}}\left\vert \frac{\partial h}{\partial x_{t+1}}\right\vert
_{\eta_{t+1}=\eta^{\{j\}}\text{, }x_{t+1}=x_{t+1}(y_{t+1},\eta^{\{j\}})}^{-1}%
}. \label{unscented_variance}%
\end{equation}

\subsection{The unbiasedness of the data-driven filters\label{pmcmc}}

As discussed in Section \ref{like_est}, the unbiasedness condition in
(\ref{unbiased}) is required to ensure that\textbf{ }a\textbf{ }PMMH scheme
yields the correct invariant posterior distribution for $\theta$, \textbf{a}nd
so we consider the theoretical properties of the new filters proposed in
Sections \ref{Section_DPF} and \ref{Section_UDPF} here.\textbf{ }While the
proof in Pitt \textit{et al.} (2012) demonstrates the unbiasedness property of
the likelihood estimator produced by the APF, their proof also applies for the
BPF and UPF. That the proof applies to the BPF is noted by Pitt \textit{et
al}., with the critical insight being that the first step of their Algorithm 1
has no impact, so that each previous particle $x_{t}^{[j]}$ retains the weight
of $\pi_{t}^{[j]}=1/N$. This is also true of the UPF, as the information
regarding the next observation $y_{t+1}$ is incorporated into the proposal
distribution at time $t+1$ through an unscented transformation, and not via an
additional resampling step.

However, due to the multiple matching technique, which is only available for
use with the DPF (and not with either\textbf{ }the APF or the UPF), the proof
in Pitt \textit{et al.} is not adequate to prove Theorem 1. In this case we
provide all details of the proof of Theorem 1 here, along with those of two
lemmas upon which our proof depends. In particular, the following two theorems
establish that the unbiasedness condition holds for all versions of the
data-driven filter, namely the DPF with single ($L=1$), partial ($L<N$) or
full ($L=N$) matching. The collective\textbf{ }conditions C1 - C3 detailed
below, which ensure that the outlined algorithms produce well-defined proposal
distributions, are assumed when deriving the unbiasedness of the resulting
likelihood estimators.

\begin{enumerate}
\item[C1.] For each fixed value $x$, the function $h(x,\eta)\ $is a strictly
monotone function of $\eta$, with continuous non-zero (partial) derivative.

\item[C2.] For each fixed value $y$, the function $x(y,\eta),\ $defined
implicitly by $y=h\left(  x,\eta\right)  $, is a strictly monotone function of
$\eta$, with continuous non-zero (partial) derivative.

\item[C3.] The conditions\ $\int x_{t+1}^{k}p(y_{t+1}|x_{t+1},\theta
)dx_{t+1}<\infty$\ hold,\ for\textbf{ }$k=0$,\textbf{ }$1$,\textbf{
}and\textbf{ }$2$.
\end{enumerate}

\begin{theorem}
\label{theorem1}Under C1\ through C2, any likelihood estimator produced by a
DPF is unbiased. That is,\textbf{ }the likelihood estimator\textbf{
}$\widehat{p}_{u}(y_{1:T}|\theta)$ resulting from any such filter,\textbf{
}with $1\leq L\leq N$\textbf{ }matches,\textbf{ }satisfies%
\[
E_{u}[\widehat{p}_{u}(y_{1:T}|\theta)]=p(y_{1:T}|\theta)\text{.}%
\]

\end{theorem}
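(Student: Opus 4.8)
The plan is to establish unbiasedness inductively on $t$, following the standard Del Moral / Pitt \textit{et al.} template for sequential importance-resampling estimators, but with the induction hypothesis phrased so as to accommodate the cyclic multiple-matching step that is unique to the DPF. Concretely, I would let $\widehat{p}_u(y_{1:t}|\theta)=\prod_{s=1}^{t}\widehat{p}_u(y_s|y_{1:s-1},\theta)$ with each factor given by $\sum_{j=1}^N w_s^{[j]}$ as in (\ref{PF_likelihood}), and prove by induction that for every $t$,
\[
E_u\!\left[\widehat{p}_u(y_{1:t}|\theta)\,\varphi(x_t)\right]=\int \varphi(x_t)\,p(x_t,y_{1:t}|\theta)\,dx_t
\]
for a suitable class of test functions $\varphi$ (bounded, or of at most quadratic growth so that C3 guarantees finiteness); taking $\varphi\equiv 1$ then yields the theorem. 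The base case $t=0$ is immediate since $x_0^{[j]}\stackrel{iid}{\sim}p(x_0|\theta)$ and $\pi_0^{[j]}=1/N$.

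For the inductive step I would condition on the particle system $\{x_t^{[j]},\pi_t^{[j]}\}$ at time $t$ and on the resampling that produced it, and then take expectation over the fresh randomness at time $t+1$: the $i.i.d.$ draws $\eta_{t+1}^{[j]}\sim p(\eta_{t+1})$, the multinomial resampling, and (crucially) note that the cyclic permutations $K_1,\dots,K_L$ are \emph{fixed}, deterministic permutations, not random — this is why the averaging $w_{t+1}^{[j]}=\frac1L\sum_{l=1}^L w_{t+1}^{[j][k_{l,j}]}$ does not introduce a bias. The key algebraic identity is that, because $x_{t+1}^{[j]}=x_{t+1}(y_{t+1},\eta_{t+1}^{[j]})$ is obtained by inverting the measurement equation, the change-of-variables relation (\ref{Dirac_inversion}) gives, for any integrable $\psi$,
\[
E_{\eta}\!\left[\left|\frac{\partial h}{\partial x_{t+1}}\right|^{-1}_{\eta_{t+1}=\eta_{t+1}^{[j]}}\psi\!\big(x_{t+1}(y_{t+1},\eta_{t+1}^{[j]})\big)\right]=\int \psi(x_{t+1})\,p(y_{t+1}|x_{t+1},\theta)\,dx_{t+1},
\]
which is exactly what is needed to convert the weighted empirical average $\sum_j w_{t+1}^{[j]}\varphi(x_{t+1}^{[j]})$ into the integral $\iint \varphi(x_{t+1})p(y_{t+1}|x_{t+1},\theta)p(x_{t+1}|x_t,\theta)\,dx_{t+1}$ after the previous-step particles are in turn integrated against the inductive hypothesis. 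Conditions C1--C2 guarantee this inversion is valid (unique root, non-vanishing Jacobian), and C3 guarantees the relevant moments are finite so that Fubini applies when exchanging the $\eta$-expectation with the integrals.

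The main obstacle — and the reason Pitt \textit{et al.}'s proof does not transfer verbatim — is handling the cyclic matching cleanly: one must check that averaging each new particle's weight over $L$ \emph{distinct} previous-particle indices, determined by fixed cyclic permutations, still produces, in expectation over the earlier particle cloud, the correct marginalization $\int p(x_{t+1}^{[j]}|x_t,\theta)\,p(x_t,y_{1:t}|\theta)\,dx_t$ rather than some reweighted variant. Because each index $k\in\{1,\dots,N\}$ appears exactly once as $k_{l,j}$ ranges over $l=1,\dots,L$ for fixed $j$ \emph{only if} $L=N$, for $L<N$ one instead exploits that the $j$-th particle $x_{t+1}^{[j]}$ is, conditional on $\eta_{t+1}^{[j]}$, independent of \emph{which} past particles it is matched to, so the matching acts only on the $x_t^{[k_{l,j}]}$ arguments; averaging over $l$ then contributes $\frac1L\sum_l \pi_t^{[k_{l,j}]}p(x_{t+1}^{[j]}|x_t^{[k_{l,j}]},\theta)$, and taking the (conditional) expectation over the resampled cloud — where each $x_t^{[k]}$ has the same marginal law — recovers the desired integral regardless of $L$. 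I would isolate this combinatorial bookkeeping as the first of the two lemmas the statement promises, reserving the second lemma for the propagation/change-of-variables identity above, and then assemble the induction in a few lines.
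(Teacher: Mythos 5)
Your overall architecture --- an induction over time in which each step combines (i) the change-of-variables identity implied by (\ref{Dirac_inversion}), namely $E_{\eta}\bigl[\,\vert\partial h/\partial x_{t+1}\vert^{-1}\psi(x_{t+1}(y_{t+1},\eta_{t+1}))\bigr]=\int\psi(x_{t+1})\,p(y_{t+1}|x_{t+1},\theta)\,dx_{t+1}$, with (ii) a marginalization over the previous particle cloud --- matches the paper's, which runs the induction backward from $T$ via Lemmas \ref{lem1} and \ref{lem2} but is otherwise equivalent to your forward, test-function formulation.

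The genuine gap is in your treatment of the multiple-matching step for $1<L<N$. You sum over the wrong index: you fix $j$ and note that $k_{1,j},\dots,k_{L,j}$ do not exhaust $\{1,\dots,N\}$ unless $L=N$, and you then try to rescue the argument by appealing to exchangeability of the resampled cloud (``each $x_t^{[k]}$ has the same marginal law''). That appeal is not available at the point where you invoke it: by your own construction the conditional expectation is taken given the time-$t$ particle system and the resampling that produced it, so the $x_t^{[k]}$ are fixed constants at that stage and ``same marginal law'' yields nothing; making exchangeability work would require reconditioning on the pre-resampling cloud and redoing the bookkeeping, which you have not done. The correct --- and much simpler --- observation is to sum over $j$ for fixed $l$: each $K_l$ is a cyclic permutation of $(1,\dots,N)$, so $j\mapsto k_{l,j}$ is a bijection and $\sum_{j=1}^{N}\pi_t^{[k_{l,j}]}f(x_t^{[k_{l,j}]})=\sum_{k=1}^{N}\pi_t^{[k]}f(x_t^{[k]})$ holds deterministically for any $f$ and every $l$; averaging over $l$ then changes nothing, so the matching scheme is exactly neutral for every $1\le L\le N$. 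This is precisely the paper's one-line argument in the proof of Lemma \ref{lem1} (``each of the terms within the double summation appears exactly $L$ times''). Two minor further points: C3 is not needed for Theorem \ref{theorem1} (it is invoked only for the UDPF in Theorem \ref{theorem2}), and your inductive identity should be written against the weighted empirical measure $\sum_{j}\pi_t^{[j]}\varphi(x_t^{[j]})$ rather than an unspecified $\varphi(x_t)$ to be well defined.
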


We adapt the proof from Pitt\textit{ \textit{et al.}} (2012) to demonstrate
the unbiasedness of the new likelihood estimators specified under Theorem 1,
and represented generically by
\begin{equation}
\widehat{p}_{u}(y_{1:T}|\theta)=\widehat{p}_{u}(y_{1}|\theta)\prod_{t=2}%
^{T}\widehat{p}_{u}(y_{t}|y_{1:t-1},\theta), \label{like_estimator}%
\end{equation}
where unbiasedness means that\textbf{ }$E[\widehat{p}_{u}(y_{1:T}%
)|\theta]=p(y_{1:T}|\theta)$. The factors in (\ref{like_estimator}) are given
in (\ref{PF_likelihood}) for each\textbf{ }$t=1,2,...,T$, with the
weights\textbf{ }$w_{t+1}^{[j]}$\textbf{ }defined by the relevant choice of
$L$. As conditioning on the parameter $\theta$ remains in all subsequent
expressions, we again suppress\textbf{ }its explicit inclusion to help\textbf{
}simplify the expressions\textbf{ }throughout the remainder of this appendix.

Firstly, as noted above, the conditions outlined for this theorem ensure that
the proposal distribution,\textbf{ }$g(x_{t+1}|y_{t+1},\theta)$, is well
defined. Next, let $u$ denote the vector of canonical $i.i.d.$ random
variables used to implement the given filtering algorithm, and let $F_{t}$ be
the subset of such variables generated\textbf{ }up to and including time
$t$,\textbf{ }for each\textbf{ }$t=0,1,...,T$. This means that by conditioning
on $F_{t}$, the particle set $\left\{  x_{0:t}^{[1]},x_{0:t}^{[2]}%
,...,x_{0:t}^{[N]}\right\}  $ and the associated normalized weights $\left\{
\pi_{t}^{[1]},\pi_{t}^{[2]},...,\pi_{t}^{[N]}\right\}  \ $that together
provide the approximation of the filtered density, as in (\ref{phat_xtplus1}),
are assumed to be known. Following\textbf{ }Pitt\textit{ \textit{et al.}
}(2012), in order to prove the unbiasedness property of the likelihood
estimator we require the following two lemmas:

\begin{lemma}
\label{lem1}%
\[
E_{u}[\widehat{p}_{u}(y_{T}|y_{1:T-1},\theta)|F_{T-1}]=\sum_{j=1}^{N}\pi
_{T-1}^{[j]}p(y_{T}|x_{T-1}^{[j]},\theta).
\]

\end{lemma}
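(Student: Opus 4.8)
The plan is to compute the conditional expectation $E_u[\widehat p_u(y_T\mid y_{1:T-1})\mid F_{T-1}]$ directly from its definition. By \eqref{PF_likelihood} we have $\widehat p_u(y_T\mid y_{1:T-1}) = \sum_{j=1}^N w_T^{[j]}$, so by linearity of expectation it suffices to show that, for each $j$,
\[
E_u\!\left[w_T^{[j]}\mid F_{T-1}\right] = \pi_{T-1}^{[j]}\,p(y_T\mid x_{T-1}^{[j]}).
\]
Conditioning on $F_{T-1}$ fixes the particle set $\{x_{T-1}^{[i]}\}$ and the normalized weights $\{\pi_{T-1}^{[i]}\}$, so the only randomness remaining is that governing the generation of the new particles at time $T$ — namely the draws $\eta_T^{[j]}\overset{i.i.d.}{\sim}p(\eta_T)$ — together with the (fixed, deterministic given $\eta_T^{[j]}$ and $y_T$) permutations $K_1,\dots,K_L$ used in the matching.

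First I would treat the single-match case $L=1$ as the transparent prototype: using \eqref{DPF_weight}, $w_T^{[j]} = \pi_{T-1}^{[j]}\,\bigl|\partial h/\partial x_T\bigr|^{-1}_{\eta_T=\eta_T^{[j]},\,x_T=x_T^{[j]}}\,p(x_T^{[j]}\mid x_{T-1}^{[j]})$, where $x_T^{[j]} = x_T(y_T,\eta_T^{[j]})$. Taking the expectation over $\eta_T^{[j]}\sim p(\eta_T)$ and performing the change of variables from $\eta$ to $x_T = x_T(y_T,\eta)$ — whose Jacobian is exactly the factor $|\partial h/\partial x_T|^{-1}$ that appears in the weight (this is the content of C1–C2, which guarantee the map is a monotone bijection with nonvanishing continuous derivative) — collapses the integral to
\[
E_u\!\left[w_T^{[j]}\mid F_{T-1}\right] = \pi_{T-1}^{[j]}\int p(x_T\mid x_{T-1}^{[j]})\,dx_T = \pi_{T-1}^{[j]}.
\]
Wait — this shows the integrand of the transition density integrates to one, but we need $p(y_T\mid x_{T-1}^{[j]})$, not $1$. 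The resolution is that the change of variables must be set up so that $p(\eta_T)\,d\eta_T$ maps to $g(x_T\mid y_T)\,dx_T$ with $g$ as in \eqref{DPF_proposal_dist}, i.e. $p(\eta_T^{[j]})|\partial h/\partial x_T|^{-1}_{\eta=\eta^*} = g(x_T^{[j]}\mid y_T)$ up to the measurement-density factor; substituting this into the weight via \eqref{IPF_MPF_weight} yields $w_T^{[j]} = \pi_{T-1}^{[j]}\,p(y_T\mid x_T^{[j]})p(x_T^{[j]}\mid x_{T-1}^{[j]})/g(x_T^{[j]}\mid y_T)$, and the expectation over the proposal $g$ then gives exactly $\pi_{T-1}^{[j]}\!\int p(y_T\mid x_T)p(x_T\mid x_{T-1}^{[j]})\,dx_T = \pi_{T-1}^{[j]}\,p(y_T\mid x_{T-1}^{[j]})$. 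Summing over $j$ delivers the claim.

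For the general case $1\le L\le N$, I would write $w_T^{[j]} = \tfrac1L\sum_{l=1}^L w_T^{[j][k_{l,j}]}$ as in \eqref{Matched_DPF_weight}. The key observation is that each summand $w_T^{[j][k_{l,j}]}$ is, by construction, an unbiased (over the draw of $\eta_T^{[j]}$, equivalently of $x_T^{[j]}\sim g$) importance-weight estimate of $\pi_{T-1}^{[k_{l,j}]}\,p(y_T\mid x_{T-1}^{[k_{l,j}]})$, by precisely the same change-of-variables argument as above applied with conditioning state $x_{T-1}^{[k_{l,j}]}$. Since the permutations $K_1,\dots,K_L$ are distinct cyclic permutations of $(1,\dots,N)$, averaging over $j$ first and then over $l$ hits each past index the same number of times; concretely $\sum_{j=1}^N w_T^{[j]} = \tfrac1L\sum_{l=1}^L\sum_{j=1}^N w_T^{[j][k_{l,j}]}$ and, because $j\mapsto k_{l,j}$ is a bijection for each fixed $l$, the inner sum ranges over all matched pairs in a way that, in expectation, reproduces $\sum_{i=1}^N \pi_{T-1}^{[i]}p(y_T\mid x_{T-1}^{[i]})$ for every $l$. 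Averaging the $L$ identical expectations leaves $\sum_{i=1}^N \pi_{T-1}^{[i]}p(y_T\mid x_{T-1}^{[i]})$, which is the assertion of the lemma.

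The main obstacle I anticipate is bookkeeping the matching structure cleanly — making precise that ``distinct cyclic permutations'' ensures each past particle $x_{T-1}^{[i]}$ is coupled to exactly one new particle under each $K_l$, so that no past weight is double-counted or dropped, and verifying that the expectation is taken over the correct $\sigma$-algebra (only the fresh $\eta_T^{[j]}$'s are random given $F_{T-1}$, while the resampling embedded in the recursion is handled by the telescoping argument in the proof of Theorem~\ref{theorem1}, not here). The analytic core — the change of variables converting the Jacobian factor in the weight into the proposal density $g$ — is routine given C1–C2 and the finiteness in C3.
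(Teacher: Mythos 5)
Your proposal is correct and follows essentially the same route as the paper: conditioning on $F_{T-1}$ leaves only the proposal draws random, the importance-sampling identity turns the conditional expectation of each matched weight $w_T^{[j][k_{l,j}]}$ into $\pi_{T-1}^{[k_{l,j}]}p(y_T\mid x_{T-1}^{[k_{l,j}]},\theta)$, and the fact that each $K_l$ is a permutation of $(1,\dots,N)$ collapses the double sum to $\sum_j \pi_{T-1}^{[j]}p(y_T\mid x_{T-1}^{[j]},\theta)$. The only cosmetic difference is that you exchange the order of summation and use the bijectivity of $j\mapsto k_{l,j}$ for each fixed $l$, whereas the paper counts that each term appears exactly $L$ times in the double sum; these are equivalent.
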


\begin{lemma}
\label{lem2}%
\begin{equation}
E_{u}[\widehat{p}_{u}(y_{T-h:T}|y_{1:T-h-1},\theta)|F_{T-h-1}]=\sum_{j=1}%
^{N}\pi_{T-h-1}^{[j]}p(y_{T-h:T}|x_{T-h-1}^{[j]},\theta). \label{lemma2}%
\end{equation}

\end{lemma}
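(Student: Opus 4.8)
The plan is to prove Lemma~\ref{lem2} by induction on $h$, the base case $h=0$ being exactly Lemma~\ref{lem1}; as throughout this appendix, $\theta$ is suppressed in the notation. For the inductive step I would assume (\ref{lemma2}) with $h$ replaced by $h-1$, which after shifting the time index reads
\[
E_{u}\!\left[\widehat{p}_{u}(y_{T-h+1:T}|y_{1:T-h})\,\big|\,F_{T-h}\right]=\sum_{j=1}^{N}\pi_{T-h}^{[j]}\,p(y_{T-h+1:T}|x_{T-h}^{[j]}).
\]
Using the multiplicative decomposition in (\ref{like_estimator}) I would split $\widehat{p}_{u}(y_{T-h:T}|y_{1:T-h-1})=\widehat{p}_{u}(y_{T-h}|y_{1:T-h-1})\,\widehat{p}_{u}(y_{T-h+1:T}|y_{1:T-h})$, note from (\ref{PF_likelihood}) that the first factor equals $\sum_{k}w_{T-h}^{[k]}$ and is determined by the draws $\eta_{T-h}^{[j]}$ together with the particle system inherited from step $T-h-1$, introduce an intermediate conditioning on $F_{T-h}$ through the tower property, apply the inductive hypothesis to the second factor, and use the identity $\widehat{p}_{u}(y_{T-h}|y_{1:T-h-1})\,\pi_{T-h}^{[j]}=w_{T-h}^{[j]}$ implied by (\ref{particle_weights}). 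This collapses the left-hand side of (\ref{lemma2}) to $E_{u}\big[\sum_{j=1}^{N}w_{T-h}^{[j]}\,p(y_{T-h+1:T}|x_{T-h}^{[j]})\,\big|\,F_{T-h-1}\big]$.

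The next step evaluates this expectation over the only randomness not yet fixed by $F_{T-h-1}$, namely the i.i.d.\ draws $\eta_{T-h}^{[j]}\sim p(\eta_{T-h})$ and the implied states $x_{T-h}^{[j]}=x_{T-h}(y_{T-h},\eta_{T-h}^{[j]})$. Writing $w_{T-h}^{[j]}$ in the multiple-matching form (\ref{Matched_DPF_weight})/(\ref{IPF_MPF_weight}), the Jacobian enters precisely as $g(x_{T-h}^{[j]}|y_{T-h})^{-1}$; under the change of variables $\eta_{T-h}^{[j]}\mapsto x_{T-h}^{[j]}$, whose density is $g(\cdot|y_{T-h})$ by (\ref{DPF_proposal_dist}), this $g^{-1}$ factor is cancelled by the proposal density and each summand reduces to
\[
E\!\left[w_{T-h}^{[j]}\,p(y_{T-h+1:T}|x_{T-h}^{[j]})\,\big|\,F_{T-h-1}\right]=\frac{1}{L}\sum_{l=1}^{L}\pi_{T-h-1}^{[k_{l,j}]}\!\int p(y_{T-h}|x)\,p(x|x_{T-h-1}^{[k_{l,j}]})\,p(y_{T-h+1:T}|x)\,dx.
\]
At this point I would invoke the Markov structure of the SSM: conditionally on $x_{T-h}=x$ the blocks $y_{T-h}$ and $y_{T-h+1:T}$ are independent, so the integrand is $p(y_{T-h:T}|x)\,p(x|x_{T-h-1}^{[k_{l,j}]})$, and the Markov property (Chapman--Kolmogorov, integrating out $x=x_{T-h}$) yields $p(y_{T-h:T}|x_{T-h-1}^{[k_{l,j}]})$. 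Conditions C1--C3 are exactly what make $g(\cdot|y_{T-h})$ a proper density, the inversion $x_{T-h}(y_{T-h},\cdot)$ and the change of variables legitimate, and the integrals finite.

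The final, and most delicate, step is the summation over $j$ combined with the average over the $L$ cyclic permutations. Because each $K_{l}=(k_{l,1},\dots,k_{l,N})$ is a permutation of $(1,\dots,N)$, for every fixed $l$ the map $j\mapsto k_{l,j}$ is a bijection of $\{1,\dots,N\}$, so
\[
\sum_{j=1}^{N}\pi_{T-h-1}^{[k_{l,j}]}\,p(y_{T-h:T}|x_{T-h-1}^{[k_{l,j}]})=\sum_{i=1}^{N}\pi_{T-h-1}^{[i]}\,p(y_{T-h:T}|x_{T-h-1}^{[i]})
\]
independently of $l$, and the outer $\tfrac{1}{L}\sum_{l=1}^{L}$ then merely averages $L$ identical terms, returning $\sum_{i=1}^{N}\pi_{T-h-1}^{[i]}\,p(y_{T-h:T}|x_{T-h-1}^{[i]})$, which is the right-hand side of (\ref{lemma2}) and closes the induction. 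I expect the real work to lie precisely in this multiple-matching bookkeeping --- which is what makes the Pitt \textit{et al.} (2012) argument inapplicable verbatim --- together with correctly interleaving the resampling step with the tower-property conditioning in the first paragraph; the latter is handled in the standard way, by first integrating over the (marginally mean-preserving) resampling, which replaces the post-resampling uniform-weight particle cloud, in expectation, by the pre-resampling cloud weighted by $\{\pi_{T-h}^{[j]}\}$, so that the inductive hypothesis and the identity $\widehat{p}_{u}(y_{T-h}|y_{1:T-h-1})\,\pi_{T-h}^{[j]}=w_{T-h}^{[j]}$ can be applied consistently.
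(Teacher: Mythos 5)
Your proposal is correct and follows essentially the same route as the paper's proof: induction anchored at Lemma~\ref{lem1}, the tower property over $F_{T-h-1}$, cancellation of the normalizing sum via $\widehat{p}_{u}(y_{T-h}|y_{1:T-h-1})\,\pi_{T-h}^{[j]}=w_{T-h}^{[j]}$, integration against the proposal so the Jacobian cancels, and the Chapman--Kolmogorov/conditional-independence step. The only (cosmetic) differences are the direction of the inductive index shift and your phrasing of the multiple-matching bookkeeping as a per-$l$ bijection $j\mapsto k_{l,j}$ rather than the paper's multiplicity count over the double sum; both yield the same collapse of the $\frac{1}{L}\sum_{l}$ average.
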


According to Section \ref{Section_DPF}, the estimator of the likelihood
component for the DPF (with potential multiple matching), for given\textbf{
}$1\leq L\leq N,$ is%
\begin{align}
\widehat{p}_{u}(y_{t+1}|y_{1:t},\theta)  &  =\sum_{j=1}^{N}w_{t+1}%
^{[j]}\nonumber\\
&  =\sum_{j=1}^{N}\left(  \frac{1}{L}\sum_{l=1}^{L}w_{t+1}^{[j][l]}\right)
\nonumber\\
&  =\sum_{j=1}^{N}\left(  \frac{1}{L}\sum_{l=1}^{L}\frac{p(y_{t+1}%
|x_{t+1}^{[j]},\theta)\pi_{t}^{[k_{l,j}]}p(x_{t+1}^{[j]}|x_{t}^{[k_{l,j}%
]},\theta)}{g(x_{t+1}^{[j]}|y_{t+1},\theta)}\right)  , \label{LE}%
\end{align}
where the proposal distribution is given in (\ref{DPF_proposal_dist}) and
$k_{l,j}$ represents the $j^{th}$ component of the $l^{th}$ cyclic
permutation, $K_{l}$, as defined in Section \ref{Section_DPF}.

\bigskip

\begin{proof}
[Proof of Lemma \ref{lem1}]We start with%
\begin{align*}
E_{u}[\widehat{p}_{u}(y_{T}|y_{1:T-1},\theta)|F_{T-1}]  &  =E_{u}\left[
\left.  \sum_{j=1}^{N}\left(  \frac{1}{L}\sum_{l=1}^{L}\frac{p(y_{T}%
|x_{T}^{[j]},\theta)\pi_{T-1}^{[k_{l,j}]}p(x_{T}^{[j]}|x_{T-1}^{[k_{l,j}%
]},\theta)}{g(x_{T}^{[j]}|y_{T},\theta)}\right)  \right\vert F_{T-1}\right] \\
&  =\frac{1}{L}\sum_{j=1}^{N}\sum_{l=1}^{L}E_{u}\left[  \left.  \pi
_{T-1}^{[k_{l,j}]}\frac{p(x_{T}^{[j]}|x_{T-1}^{[k_{l,j}]},\theta)p(y_{T}%
|x_{T}^{[j]},\theta)}{g(x_{T}^{[j]}|y_{T},\theta)}\right\vert F_{T-1}\right]
.
\end{align*}
The randomness of each component within the double\textbf{ }summation, for
which the expectation is to be taken, comes from the proposal distribution
that simulates the particle\emph{ }$x_{T}^{[j]}$. Hence, the expectation can
be replaced with its integral form explicitly as:\emph{ }%
\begin{align}
E_{u}[\widehat{p}_{u}(y_{T}|y_{1:T-1},\theta)|F_{T-1}]  &  =\frac{1}{L}%
\sum_{j=1}^{N}\sum_{l=1}^{L}\int\pi_{T-1}^{([k_{l,j}]}\frac{p(x_{T}%
|x_{T-1}^{[k_{l,j}]},\theta)p(y_{T}|x_{T},\theta)}{g(x_{T}|y_{T},\theta
)}g(x_{T}|y_{T},\theta)dx_{T}\label{step_1}\\
&  =\frac{1}{L}\sum_{j=1}^{N}\sum_{l=1}^{L}\left\{  \pi_{T-1}^{[k_{l,j}]}\int
p(y_{T},x_{T}|x_{T-1}^{[k_{l,j}]},\theta)dx_{T}\right\} \nonumber\\
&  =\frac{1}{L}\sum_{j=1}^{N}\sum_{l=1}^{L}\pi_{T-1}^{[k_{l,j}]}%
p(y_{T}|x_{T-1}^{[k_{l,j}]},\theta).\nonumber
\end{align}
Since the $N$ permutations of the$\ $previous particles are mutually
exclusive, each of the terms within the double summation appears exactly $L$
times. Therefore,%
\begin{align*}
E_{u}[\widehat{p}_{u}(y_{T}|y_{1:T-1},\theta)|F_{T-1}]  &  =\frac{1}{L}%
\sum_{j=1}^{N}L\left[  \pi_{T-1}^{[j]}p(y_{T}|x_{T-1}^{[j]},\theta)\right] \\
&  =\sum_{j=1}^{N}\pi_{T-1}^{[j]}p(y_{T}|x_{T-1}^{[j]},\theta).
\end{align*}
Hence, Lemma \ref{lem1} holds.\bigskip
\end{proof}

\begin{proof}
[Proof of Lemma \ref{lem2}]To prove Lemma \ref{lem2}, we use method of
induction as per Pitt\textit{ \textit{et al.}} \textbf{(}2012\textbf{) }First
note that, according to Lemma \ref{lem1}, (\ref{lemma2}) holds when $h=0$.
Next, assuming that Lemma \ref{lem2} holds for any integer\textbf{ }$h\geq0$,
we show that it also holds for $h+1$.

By the law of iterated expectations, we have%
\begin{align*}
&  E_{u}[\widehat{p}_{u}(y_{T-h-1:T}|y_{1:T-h-2},\theta)|F_{T-h-2}]\\
&  =E_{u}\left[  E_{u}\left[  \widehat{p}_{u}(y_{T-h:T}|y_{1:T-h-1}%
,\theta)|F_{T-h-1}\right]  \widehat{p}_{u}(y_{T-h-1}|y_{1:T-h-2}%
,\theta)|F_{T-h-2}\right]  .
\end{align*}
By substituting the formula of $\widehat{p}_{u}(y_{T-h-1}|\theta,y_{1:T-h-2})$
and using the assumption that Lemma \ref{lem2} holds for $h$, we have%
\begin{align*}
&  E_{u}[\widehat{p}_{u}(y_{T-h-1:T}|y_{1:T-h-2},\theta)|F_{T-h-2}]\\
&  =E_{u}\left[  \left.  \left\{  \sum_{j=1}^{N}\pi_{T-h-1}^{[j]}%
p(y_{T-h:T}|x_{T-h-1}^{[j]},\theta)\right\}  \left\{  \sum_{j=1}^{N}%
w_{T-h-1}^{[j]}\right\}  \right\vert F_{T-h-2}\right]
\end{align*}
and noting that $\pi_{T-h-1}^{[j]}$ is the normalized version of
$w_{T-h-1}^{[j]}$, then%
\begin{align*}
&  E_{u}[\widehat{p}_{u}(y_{T-h-1:T}|y_{1:T-h-2},\theta)|F_{T-h-2}]\\
&  =E_{u}\left[  \left.  \left\{  \frac{\sum_{j=1}^{N}p(y_{T-h:T}%
|x_{T-h-1}^{[j]},\theta)w_{T-h-1}^{[j]}}{\sum_{k=1}^{N}w_{T-h-1}^{[k]}%
}\right\}  \left\{  \sum_{j=1}^{N}w_{T-h-1}^{[j]}\right\}  \right\vert
F_{T-h-2}\right] \\
&  =\sum_{j=1}^{N}E_{u}\left[  \left.  p(y_{T-h:T}|x_{T-h-1}^{[j]}%
,\theta)w_{T-h-1}^{[j]}\right\vert F_{T-h-2}\right]  .
\end{align*}
Adopting a similar procedure to that above, owing to the fact that the
expectation is taken with respect to the relevant proposal distribution and
that the multiple matches employ only cyclic rotations, we have%
\begin{align*}
&  E_{u}\left[  \left.  \widehat{p}_{u}(y_{T-h-1:T}|y_{1:T-h-2},\theta
)\right\vert F_{T-h-2}\right] \\
&  =\sum_{j=1}^{N}E_{u}\left[  \left.  p(y_{T-h:T}|x_{T-h-1}^{[j]}%
,\theta)\frac{p(y_{T-h-1}|x_{T-h-1}^{[j]},\theta)\frac{1}{L}\sum_{l=1}^{L}%
\pi_{T-h-2}^{[k_{l,j}]}p(x_{T-h-1}^{[j]}|x_{T-h-2}^{[k_{l,j}]},\theta
)}{g(x_{T-h-1}^{[j]}|y_{T-h-1},\theta)}\right\vert F_{T-h-2}\right] \\
&  =\frac{1}{L}\sum_{j=1}^{N}\sum_{l=1}^{L}\pi_{T-h-2}^{[k_{l,j}]}\int
p(y_{T-h:T}|x_{T-h-1},\theta)p(y_{T-h-1}|x_{T-h-1},\theta)p(x_{T-h-1}%
|x_{T-h-2}^{[k_{l,j}]},\theta)dx_{T-h-1}\\
&  =\sum_{j=1}^{N}\left\{  \pi_{T-h-2}^{[j]}\int p(y_{T-h:T}|x_{T-h-1}%
,\theta)p(y_{T-h-1}|x_{T-h-1},\theta)p(x_{T-h-1}|x_{T-h-2}^{[j]}%
,\theta)dx_{T-h-1}\right\} \\
&  =\sum_{j=1}^{N}\pi_{T-h-2}^{[j]}p(y_{T-h-1:T}|x_{T-h-2}^{[j]},\theta)
\end{align*}
as required.
\end{proof}

\begin{proof}
[Proof of Theorem \ref{theorem1}]From Lemma \ref{lem2}, when $h=T-1$, then
\[
E_{u}\left[  \left.  \widehat{p}_{u}(y_{1:T}|\theta)\right\vert F_{0}\right]
=\sum_{j=1}^{N}p(y_{1:T}|x_{0}^{[j]},\theta)\pi_{0}^{[j]}.
\]
Next, marginalizing over the randomness of $u$ associated with generating a
set of equally weighted particles, $\left\{  x_{0}^{[1]},x_{0}^{[2]}%
,...,x_{0}^{[N]}\right\}  $ at time $t=0$\emph{ }from the initial distribution
$p(x_{0}),$ we have%
\begin{align*}
E_{u}\left[  \widehat{p}_{u}(y_{1:T}|\theta)\right]   &  =E_{u}\left[
E_{u}\left[  \left.  \widehat{p}_{u}(y_{1:T}|\theta)\right\vert F_{0}\right]
\right] \\
&  =E_{u}\left[  \sum_{j=1}^{N}p(y_{1:T}|x_{0}^{[j]},\theta)\pi_{0}%
^{[j]}\right] \\
&  =\frac{1}{N}\sum_{j=1}^{N}E_{u}\left[  p(y_{1:T}|x_{0}^{[j]},\theta
)\right]  .
\end{align*}
Finally, since\textbf{ }the expectation of $p(y_{1:T}|x_{0}^{[j]},\theta)$ is
the same for all $j$, then%
\begin{align*}
E_{u}\left[  \widehat{p}_{u}(y_{1:T}|\theta)\right]   &  =E_{u}\left[
p(y_{1:T}|x_{0},\theta)\right] \\
&  =\int p(y_{1:T}|x_{0},\theta)p(x_{0}|\theta)dx_{0}\\
&  =p(y_{1:T}|\theta),
\end{align*}
and the unbiasedness property of the likelihood estimator associated with each
of the DPF algorithms specified under Theorem \ref{theorem1} is established.
\end{proof}

\bigskip

\begin{theorem}
\label{theorem2}Under C1 through C3, the likelihood estimator produced by the
UDPF filter is unbiased.\emph{ }
\end{theorem}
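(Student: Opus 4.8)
The plan is to follow the template of the proof of Theorem \ref{theorem1}, exploiting the fact that the UDPF uses only single matching: each new particle $x_{t+1}^{[j]}$ is weighted against $x_{t}^{[j]}$ alone, via the standard importance weight (\ref{UDPF_weight}) built from the proposal (\ref{udpf_proposal}). Consequently, once the UDPF proposal is shown to be a well-defined, strictly positive density, the UDPF analogues of Lemma \ref{lem1} and Lemma \ref{lem2} follow by the same arguments as in the DPF case, and in fact more simply, since there is no average over cyclic permutations to unwind. Note, however, that this is not a mere corollary of Theorem \ref{theorem1}: the UDPF proposal genuinely depends on the conditioning particle $x_{t}^{[j]}$ (through $\widehat{\mu}_{P,t+1}^{[j]}$ and $\widehat{\sigma}_{P,t+1}^{2[j]}$) and the weight formula differs, so the calculation must be redone.

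The first --- and only genuinely new --- step is to verify that $g(x_{t+1}|x_{t}^{[j]},y_{t+1},\theta)$ in (\ref{udpf_proposal}) is a proper probability density supported on all of $\mathbb{R}$; this is where C3 enters. Under C1 and C2, the representation (\ref{Dirac_inversion}) is valid, with the inverse map $x_{t+1}(y_{t+1},\eta_{t+1})$ and the Jacobian factor well defined. Condition C3 with $k=0$ guarantees that $p(y_{t+1}|x_{t+1},\theta)$ is proportional to a proper density in $x_{t+1}$, and with $k=1,2$ that this density has finite first two moments; hence the unscented approximations $\widehat{\mu}_{M,t+1}$ and $\widehat{\sigma}_{M,t+1}^{2}$ defined in (\ref{unscented_mean})--(\ref{unscented_variance}) target finite quantities and yield the proper Gaussian approximation (\ref{DUPF_likelihood}). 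Combining this with the Gaussian transition (\ref{state_tran}) through the stated conjugacy algebra produces (\ref{udpf_proposal}), a Gaussian with finite mean $\widehat{\mu}_{t+1}^{[j]}$ and variance $\widehat{\sigma}_{t+1}^{2[j]}>0$ (a positive harmonic-type combination of $\widehat{\sigma}_{M,t+1}^{2}$ and $\widehat{\sigma}_{P,t+1}^{2[j]}$), hence strictly positive everywhere. I expect this verification to be the main obstacle; everything downstream is routine.

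Given this, I would first establish the UDPF analogue of Lemma \ref{lem1}. Conditioning on $F_{T-1}$, each $x_{T}^{[j]}$ is drawn independently from $g(\cdot|x_{T-1}^{[j]},y_{T},\theta)$, so from (\ref{UDPF_weight}),
\[
E_{u}[w_{T}^{[j]}|F_{T-1}]=\pi_{T-1}^{[j]}\int\frac{p(y_{T}|x_{T},\theta)\,p(x_{T}|x_{T-1}^{[j]},\theta)}{g(x_{T}|x_{T-1}^{[j]},y_{T},\theta)}\,g(x_{T}|x_{T-1}^{[j]},y_{T},\theta)\,dx_{T}=\pi_{T-1}^{[j]}\,p(y_{T}|x_{T-1}^{[j]},\theta),
\]
the cancellation being legitimate because $g$ is strictly positive wherever the numerator is, and the resulting integral being finite (it equals the conditional density value $p(y_{T}|x_{T-1}^{[j]},\theta)$) with no tail comparison needed --- precisely the feature that makes the standard particle-filter argument go through. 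Summing over $j$ and using (\ref{PF_likelihood}) gives $E_{u}[\widehat{p}_{u}(y_{T}|y_{1:T-1},\theta)|F_{T-1}]=\sum_{j=1}^{N}\pi_{T-1}^{[j]}p(y_{T}|x_{T-1}^{[j]},\theta)$. The induction yielding the UDPF analogue of Lemma \ref{lem2} then proceeds exactly as in the DPF proof --- law of iterated expectations, substitution of the $\widehat{p}_{u}(y_{T-h-1}|\cdot)$ factor, use of $\pi_{T-h-1}^{[j]}=w_{T-h-1}^{[j]}/\sum_{k}w_{T-h-1}^{[k]}$ to telescope, and the change of variables to $g$ --- but with the simplification that, since $L=1$, no permutation average appears.

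Finally, setting $h=T-1$ in the UDPF analogue of Lemma \ref{lem2} gives $E_{u}[\widehat{p}_{u}(y_{1:T}|\theta)|F_{0}]=\sum_{j=1}^{N}\pi_{0}^{[j]}p(y_{1:T}|x_{0}^{[j]},\theta)$, and marginalizing over the initial particles $x_{0}^{[j]}\overset{iid}{\sim}p(x_{0}|\theta)$ with $\pi_{0}^{[j]}=1/N$ yields
\[
E_{u}[\widehat{p}_{u}(y_{1:T}|\theta)]=\frac{1}{N}\sum_{j=1}^{N}E_{u}[p(y_{1:T}|x_{0}^{[j]},\theta)]=\int p(y_{1:T}|x_{0},\theta)\,p(x_{0}|\theta)\,dx_{0}=p(y_{1:T}|\theta),
\]
exactly as in the last step of the proof of Theorem \ref{theorem1}, which establishes the claimed unbiasedness.
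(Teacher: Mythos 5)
Your proof is correct, but it takes a different route from the paper's. The paper disposes of Theorem \ref{theorem2} in a few lines by observing that the UDPF is an instance of the APF template of Pitt \textit{et al.} (2012) with the first-stage (look-ahead) weights set identically to one, $g(y_{t+1}|x_{t}^{[j]},\theta)=1$, and the second-stage proposal given by (\ref{udpf_proposal}); unbiasedness is then inherited from their result, with C1--C3 invoked only to guarantee that $\widehat{\mu}_{M,t+1}$ and $\widehat{\sigma}_{M,t+1}^{2}$ are finite so that the Gaussian proposal is well defined. You instead rerun the Lemma \ref{lem1}/Lemma \ref{lem2} induction directly for the UDPF, correctly noting that this is not a corollary of Theorem \ref{theorem1} because the proposal now depends on the conditioning particle $x_{t}^{[j]}$ and there is no permutation average. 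The trade-off: the paper's reduction is shorter but asks the reader to verify that the UDPF really does fit the APF schema (in particular that trivial first-stage weights mean the pre-selection step introduces no bias), whereas your argument is self-contained, makes the single-matching simplification explicit, and pinpoints exactly where each of C1--C3 is used --- namely, in ensuring that (\ref{Dirac_inversion}) is valid and that the unscented moments target finite quantities, so that $g(\cdot|x_{t}^{[j]},y_{t+1},\theta)$ is a proper, everywhere-positive Gaussian and the importance-weight cancellation is legitimate. Both proofs rest on the same underlying telescoping identity, so the mathematical content is equivalent; yours simply carries it out in full rather than by citation.
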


\begin{proof}
[Proof of Theorem \ref{theorem2}]By recognizing the similarity between the
UDPF\ and the APF, the proof of Theorem \ref{theorem2} can be deduced
directly\textbf{ }from the unbiasedness proof of Pitt\textit{ \textit{et al.}}
(2012). In reference to Algorithm 1 of Pitt\textit{ \textit{et al.}}, the UDPF
algorithm can be reconstructed by setting $g(y_{t+1}|x_{t}^{[j]},\theta)=1$
and with\textbf{ }the proposal distribution, $g(x_{t+1}|x_{t}^{[j]}%
,y_{t+1},\theta)$, formed as per (\ref{udpf_proposal}). All that is required
is that we ensure, through sufficient conditions C1 - C3, that the approximate
moments $\widehat{\mu}_{M,t+1}$ and $\widehat{\sigma}_{M,t+1}^{2}$ used to
obtain this Gaussian proposal distribution are finite.
\end{proof}

\end{document}